\def\ps@headings{%
\def\@oddhead{\mbox{}\scriptsize\rightmark \hfil \thepage}%
\def\@evenhead{\scriptsize\thepage \hfil \leftmark\mbox{}}%
\def\@oddfoot{}%
\def\@evenfoot{}}
\newtheorem{theorem}{Theorem}
\newtheorem{lemma}{Lemma}
\newtheorem{definition}{Definition}
\newcommand{\bA}{\mathbf{A}}
\newcommand{\BF}{\mathbb{F}}
\newcommand{\BS}{\mathbb{S}}
\newcommand{\SF}{{\cal F}}
\newcommand{\SG}{{\cal G}}
\newcommand{\SP}{{\cal P}}
\newcommand{\SQ}{{\cal Q}}
\newcommand{\SR}{{\cal R}}
\newcommand{\SU}{{\cal U}}
\newcommand{\SV}{{\cal V}}
\newcommand{\SX}{{\cal X}}
\newcommand{\SY}{{\cal Y}}
\newcommand{\SZ}{{\cal Z}}
\begin{document}

\title{\huge Analysis and Construction of Functional Regenerating Codes with
	Uncoded Repair for Distributed Storage Systems}

\author{Yuchong Hu$^\dagger$, Patrick P. C. Lee$^\ddagger$, and Kenneth W. Shum$^\dagger$\\
$^\dagger$Institute of Network Coding, The Chinese University of Hong Kong\\
$^\ddagger$Department of Computer Science and Engineering, The Chinese University of Hong Kong\\
\{ychu,wkshum\}@inc.cuhk.edu.hk, pclee@cse.cuhk.edu.hk}

\maketitle


\begin{abstract}
Modern distributed storage systems apply redundancy coding techniques to
stored data.  One form of redundancy is based on regenerating codes, which can
minimize the repair bandwidth, i.e., the amount of data transferred when
repairing a failed storage node.  Existing regenerating codes mainly require
surviving storage nodes encode data during repair.  In this paper, we study
{\em functional minimum storage regenerating (FMSR)} codes, which enable 
{\em uncoded} repair without the encoding requirement in surviving nodes, 
while preserving the
minimum repair bandwidth guarantees and also minimizing disk reads.  Under
double-fault tolerance settings, we formally prove the existence of FMSR
codes, and provide a deterministic FMSR code construction that can
significantly speed up the repair process.  We further implement and evaluate
our deterministic FMSR codes to show the benefits.  Our work is built atop 
a practical cloud storage system that implements FMSR codes, and we provide
theoretical validation to justify the practicality of FMSR codes. 
\end{abstract}

\section{Introduction}
\label{sec:introduction}

We have witnessed the wide deployment of storage systems in
Internet-wide distributed settings, such as peer-to-peer storage (e.g.,
\cite{Kubiatowicz00,Bhagwan04,Chun06,Weatherspoon07}) and cloud storage
(e.g., GFS \cite{Ghemawat03} and Azure \cite{Calder11}), in which data 
is striped over multiple storage nodes in a networked environment.  For data
availability, a storage system must keep user data for a long period of time
and allow users to access their data on demand.  However, storage nodes are
often deployed in commodity machines and prone to failures \cite{Ghemawat03}.
It is thus important for a storage system to ensure data availability in
practical deployment.

One way to ensure data availability is to store redundant data over multiple
storage nodes.  Redundancy can be achieved via {\em maximum distance separable
(MDS)} codes such as Reed-Solomon codes \cite{reed60}, whose idea is that 
even if any subset of nodes fail, the original data remains accessible from
the remaining surviving nodes.  In general, Reed-Solomon codes have
significantly less redundancy overhead than simple replication of data under
the same fault tolerance requirement. 


When a storage node fails, it is necessary to recover the lost data of the
failed node to preserve the required level of fault tolerance.  
{\em Regenerating codes} \cite{dimakis10} have been proposed to minimize the
{\em repair bandwidth}, which defines the amount of data traffic transferred
in the repair process.  Regenerating codes are built on network coding
\cite{ahlswede00}, such that to repair a failed node, existing surviving nodes
encode their own stored data and send the encoded data to the new node, which
then reconstructs the lost data.  It is shown that regenerating codes use less
repair bandwidth than Reed-Solomon codes, given the same storage overhead and
fault tolerance requirements. 

However, there are challenges of deploying regenerating codes in practice.
First, most regenerating code constructions (e.g.,
\cite{wu09a,Cullina09,Cadambe10,suh11b,rashmi11,shah12})  
require storage nodes to encode stored data during repair. This may
not be feasible for some storage devices that merely
provide the basic I/O functionalities without any encoding capabilities. 
More importantly, even if storage nodes have encoding capabilities, they must
first read all available data from disk and combine the data into encoded form
before transmitting encoded data for repair.  This leads to high disk reads, 
which may degrade the actual repair performance. 

On the applied side, a cloud storage system NCCloud \cite{Hu12} proposes and
implements {\em functional minimum storage regenerating (FMSR)} codes, which
have several key properties: (i) FMSR codes preserve the fault tolerance of
MDS codes and have the same redundancy overhead as MDS codes for a given
fault tolerance; (ii) FMSR codes preserve the benefits of network
coding as they minimize the repair bandwidth (e.g., the repair bandwidth
saving compared to RAID-6 codes is up to 50\% \cite{Hu12}); and (iii) FMSR
codes use {\em uncoded} repair without
requiring encoding of surviving nodes during repair, and this can minimize
disk reads as the amount of data read from disk is the same as that being
transferred.  FMSR codes are designed as {\em non-systematic} codes as they do
not keep the original uncoded data as their systematic counterparts, but
instead store only linear combinations of original data called 
{\em parity chunks}.  Each round of repair regenerates new parity
chunks for the new node and ensures that the fault tolerance level is
maintained.  A trade-off of FMSR codes is that the whole encoded file must be
decoded first if parts of a file are accessed.  Nevertheless, FMSR codes are
suited to long-term archival applications, since data backups are
rarely read and it is common to restore the whole file rather than file
parts. 

While FMSR codes have been experimented on real-life cloud testbeds, there
remain open issues regarding whether FMSR codes exist and how they are
deterministically constructed.  In particular, given that new parity chunks
are regenerated in each round of repair, we need to ensure that such chunks
preserve the fault tolerance of MDS codes after multiple rounds of repair.
Thus, the key motivation of this work is to {\em provide
theoretical foundations for the practicality of FMSR codes}. 

In this paper, we conduct formal analysis on the existence of FMSR codes and
provide a deterministic construction for FMSR codes, with an objective of
theoretically validating the practicality of FMSR codes in distributed
storage systems.  We focus on the double-fault tolerance setting (i.e., at
most two node failures can be tolerated) as in conventional RAID-6
codes \cite{raid6}.  Note that double-fault tolerance is by default used in
practical cloud storage systems such as GFS \cite{Ghemawat03} and Azure
\cite{Calder11}.  Our contributions are three-fold.
\begin{itemize}
\item
We formally prove the existence of FMSR codes with uncoded repair, such that 
the fault tolerance of MDS codes is preserved after any number of rounds of
repair.  
\item
We provide a deterministic FMSR code construction, such that the repair can
deterministically specify (i) the chunks to be read from surviving nodes and
(ii) the encoding coefficients used to regenerate new chunks.  This
significantly speeds up the repair time compared to the random FMSR code
construction used in NCCloud \cite{Hu12}. 
\item 
We build and evaluate our deterministic FMSR codes, and show that the chunk
selection and regeneration during repair can be finished within less than
one second. 
\end{itemize}

The rest of the paper proceeds as follows.
Section~\ref{sec:related} reviews related work. 
Section~\ref{sec:model} characterizes the system model of FMSR codes and
formulates the problems. 
Section~\ref{sec:existence} formally proves the existence of FMSR codes. 
Section~\ref{sec:deterministic} provides a deterministic FMSR code
construction. 
Section~\ref{sec:evaluation} presents evaluation results. 
Section~\ref{sec:conclusions} concludes the paper.

\section{Background and Related Work}
\label{sec:related}



Dimakis et al. \cite{dimakis10} first propose {\em regenerating codes} based
on {\em network coding} \cite{ahlswede00} for distributed storage systems. It
is shown that when repairing a single failed storage node, regenerating codes
use less repair bandwidth than conventional Reed-Solomon codes \cite{reed60}
by transmitting encoded data from the surviving nodes to a new node. Also,
\cite{dimakis10} gives an optimal tradeoff spectrum between storage cost and
repair bandwidth and identifies two extreme points.  One extreme point refers
to the {\em minimum storage regenerating (MSR)} codes, in which each node
stores the minimum amount of data as in Reed-Solomon codes.  Another extreme
point is the {\em minimum bandwidth regenerating (MBR)} codes, which allow
each node to store more data than in conventional Reed-Solomon codes to minimize
the repair bandwidth.  In this work, we focus on the MSR codes, so that we can
fairly compare with conventional Reed-Solomon codes under the same storage
overhead.

As shown in \cite{dimakis10,Yunnanwu10,hu10}, the MSR point is achievable
under {\em functional-repair}, which means that the repaired data may not be
the same as the lost data while still maintaining the same fault tolerance
level.  However, the corresponding coding schemes perform random linear coding
in surviving nodes and do not provide explicit construction.  Then there are
extensive studies (e.g.,
\cite{wu09a,Cullina09,Cadambe10,suh11b,rashmi11,shah12}) on the
\emph{exact-repair} MSR (EMSR) codes, in which the data reconstructed is
identical to the lost data.

Most EMSR codes require storage nodes encode stored data during repair.
Authors in \cite{Rouayheb10,Shah12b} propose regenerating codes that
eliminate encoding of storage nodes during repair. We call it
{\em uncoded repair} \cite{Rouayheb10}, or {\em repair-by-transfer}
\cite{Shah12b}.  However, their constructions belong to MBR codes.  EMSR code
constructions based on uncoded repair have been proposed in
\cite{tamo11,wang11}.  The EMSR code in \cite{tamo11} has the uncoded repair
property for systematic nodes that store original data chunks but not for the
parity nodes that store encoded chunks, while that in \cite{wang11} has
the uncoded repair property for both systematic and parity nodes. However,
the code construction in \cite{wang11} requires the total number of data
chunks being stored increase exponentially with the number of systematic
nodes.  This increases the number of chunk accesses, and limits its
application in practical storage systems.

Several studies (e.g., \cite{wang10,xiang11,khan12}) 
propose uncoded repair schemes that minimize disk reads for XOR-based
erasure codes.  Their solutions are built on existing code constructions. 
In general, they do not achieve the global minimum point.  

A recent applied work \cite{Hu12} builds a network-coding-based cloud storage
system called NCCloud.  The authors build and evaluate {\em functional
MSR (FMSR)} codes, which minimize the repair bandwidth using uncoded repair.
Later in \cite{Shum12}, the correctness of FMSR codes is analyzed for a
special case of two systematic nodes.  In this paper, we generalize the
analysis, and also provide a deterministic code construction, for more
systematic nodes.

\section{System Model for FMSR Codes}
\label{sec:model}

\subsection{Basics of FMSR Codes}

We first describe the basics of FMSR codes, which are used by NCCloud
\cite{Hu12} to store files over multiple independent storage nodes. Each
node could be a disk device, a storage server, or a cloud storage provider.
NCCloud motivates using FMSR codes to provide fault-tolerant, long-term
archival storage using multiple clouds, so as to save the monetary cost in
migrating data between cloud providers during repair.  FMSR codes have
three design properties, which we elaborate below. 


{\bf Property 1: FMSR codes preserve the fault tolerance and storage
efficiency of MDS codes.}  MDS codes are defined by two parameters $n$ and $k$
($k<n$). An ($n$, $k$)-MDS code divides a file of size $M$ into $k$ pieces of
size $M/k$ each, and encodes them into $n$ pieces such that any $k$ out of $n$
encoded pieces suffice to recover the original file.  By storing the $n$
encoded pieces over $n$ nodes, a storage system can tolerate at most $n-k$
node failures.  An example of MDS codes is Reed-Solomon codes \cite{reed60}.

Figure~\ref{fig:fmsr} shows the FMSR codes for a special case $n=4$
and $k=2$. To store a file of size $M$ units, an ($n$, $k$)-FMSR code splits
the file evenly into $k(n-k)$ {\em native chunks}, say $F_1, F_2, \ldots,
F_{k(n-k)}$, and encodes them into $n(n-k)$ {\em parity chunks} of size
$\frac{M}{k(n-k)}$ each.  Each $l^{th}$ parity chunk is formed by a linear
combination of the $k(n-k)$ native chunks, i.e., $\sum_{m=1}^{k(n-k)}
\alpha_{l,m} F_m$ for some encoding coefficients $\alpha_{l,m}$.  All encoding
coefficients and arithmetic are operated over a finite field $\BF_q$ of
size $q$.  We store the $n(n-k)$ parity chunks on $n$ nodes, each keeping
$n-k$ parity chunks.  Note that native chunks need not be stored. The original
file can be restored by decoding $k(n-k)$ parity chunks of any $k$ nodes, where
decoding can be done by inverting an encoding matrix \cite{Plank97}.  Let
$P_{i,j}$ be the $j^{th}$ parity chunk stored on node $i$, where $i=1, 2,
	\ldots, n$ and $j=1, \ldots, n-k$.

\begin{figure}[t]
\centering
\includegraphics[width=0.9\linewidth]{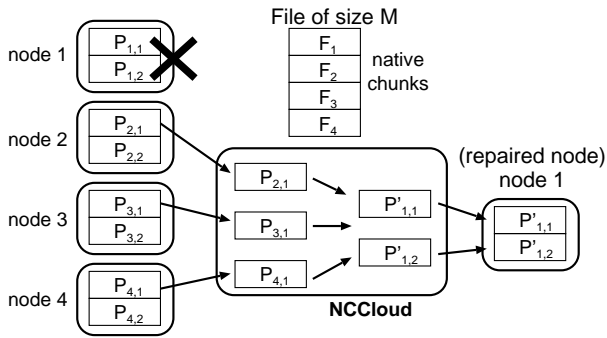}
\caption{FMSR codes with $n=4$ and $k=2$.}
\label{fig:fmsr}
\vspace{-1em}
\end{figure}

{\bf Property 2: FMSR codes minimize the repair bandwidth.}  If a node
fails, we must reconstruct the lost data of the failed node to preserve fault
tolerance.  The conventional repair of Reed-Solomon codes reads $k$ pieces
from any $k$ surviving nodes to restore the original file (by the design of
MDS codes).  Clearly, the amount of data read is the file size $M$.  FMSR
codes seek to read less than $M$ units of data to reconstruct the lost data.
We define {\em repair bandwidth} as the amount of data read from surviving
nodes during repair.  FMSR codes are designed to match the minimum storage
point of regenerating codes when repairing a node failure \cite{dimakis10},
while having each node store $M/k$ units of data as in Reed-Solomon codes. 
To repair a failed node in FMSR codes, each surviving node transfers data of
size $\frac{M}{k(n-k)}$ units as in \cite{dimakis10}, or equivalently, a size
of one parity chunk.  In a special case of $n=4$ and $k=2$ (see
Figure~\ref{fig:fmsr}), the repair bandwidth is 0.75$M$, i.e., 25\% less than
that of conventional repair of Reed-Solomon codes.  In general, the repair
bandwidth of FMSR codes for $k=n-2$ is $\frac{M(n-1)}{2(n-2)}$, and its saving
compared to RAID-6 codes \cite{raid6} (which are also double-fault tolerant)
is up to 50\% if $n$ is large \cite{Hu12}.  

{\bf Property 3: FMSR codes use {\em uncoded} repair.} During repair, each
surviving node under FMSR codes transfers one parity chunk, without any
encoding operations.  This also minimizes the amount of data read from disk.
Suppose we have a failed node $l$ (e.g., $l=1$ in
Figure~\ref{fig:fmsr}). Then we read one parity chunk denoted by $P_{i,f(i)}$
from each surviving node $i$, where $1\le i\le n$ and $i\ne l$, and $f(.)$
denotes some function that specifies which chunk to be read from a
surviving node.  Then we encode the $n-1$ parity chunks into $n-k$
linearly independent parity chunks $P'_{l,1}, P'_{l,2}, \ldots P'_{l,n-k}$,
which will all be stored in a new node, which becomes the new node $l$ (called
the {\em repaired node}).  Each new parity chunk is generated by:
\begin{equation}
\label{fml:basic}
P'_{l,j} = \sum_{i=1, i\ne l}^n \gamma_{i,j}  P_{i,f(i)}, \textrm{ for }
j=1,2,\ldots n-k,
\end{equation}
where $\gamma_{i,j}$ denotes some coefficient for encoding the collected
parity chunks into new chunks. In Section~\ref{sec:deterministic}, we formally
specify how we choose $f(.)$ and $\gamma_{i,j}$.

\subsection{Formulation of Repair Problem in FMSR Codes}

We formulate the repair problem in FMSR codes based on \cite{Hu12}.  Note
that \cite{Hu12} only gives a high-level description, without formal
definitions and theoretical validations.  Here, we provide a theoretical
framework that formalizes the idea of \cite{Hu12}. 

FMSR codes satisfy the MDS property, as described below.

\begin{definition}
\label{def:mds}
{\em MDS property.} For any subset of $k$ out of $n$ nodes, if the $k(n-k)$
parity chunks from the $k$ nodes can be decoded to the $k(n-k)$ native chunks
of the original file, then the MDS property is satisfied. $\hfill\Box$
\end{definition}

\begin{definition}
{\em Decodability}.  We say that a collection of $k(n-k)$ parity chunks is
{\em decodable} if the parity chunks can be decoded to the original file,
which can be verified by checking if the associated $k(n-k)$ vectors of
encoding coefficients are linearly independent. Note that these $k(n-k)$
chunks may be scattered among $n$ nodes, and need not reside in $k$ nodes.
$\hfill\Box$
\end{definition}

Note that FMSR codes operate on parity chunks.  For simplicity, when we use
the term ``chunk'' in our discussion, we actually refer to a parity chunk.

Since FMSR codes regenerate different chunks in each repair, one design
challenge of FMSR codes is to preserve the MDS property after multiple rounds
of repairs.  We illustrate with an example in Figure~\ref{fig:fmsr}.
Suppose that node~1 fails, and we construct new chunks $P'_{1,1}$ and
$P'_{1,2}$ using $P_{2,1}$, $P_{3,1}$, and $P_{4,1}$ as in
Figure~\ref{fig:fmsr}.  Next, suppose that node~2 fails.  If we construct new
chunks $P'_{2,1}$ and $P'_{2,2}$ using $P'_{1,1}$, $P_{3,1}$, and $P_{4,1}$,
then in the repaired nodes 1 and 2, the chunks $\{P'_{1,1}, P'_{1,2},
P'_{2,1}, P'_{2,2}\}$ are the linear combinations of only three chunks
$P_{2,1}$, $P_{3,1}$, and $P_{4,1}$ instead of four.  So the chunks in the
repaired nodes~1 and 2 are {\em not} decodable, and the MDS property is lost.

Thus, to preserve the MDS property over multiple rounds of repair, NCCloud
uses a specific implementation of FMSR codes based on random chunk selection,
which we call {\em random FMSR codes}.  NCCloud seeks to {\em completely}
avoid linear dependence in chunk regeneration and hence losing the MDS
property.  Specifically, NCCloud performs the $r^{th}$ (where $r\ge 1$) round
of repair as follows:
\begin{enumerate}
\item[(i)]
It randomly selects a chunk from each surviving node (i.e., $f(.)$
returns a random value), and generates random encoding coefficients to encode
the selected chunks into new chunks (i.e., $\gamma_{i,j}$'s are randomly
chosen).
\item[(ii)]
It then performs {\em two-phase checking}.  In the first phase, it checks if
the MDS property is satisfied with the new chunks generated (i.e., the chunks
of any $k$ out of $n$ nodes remain decodable) after the current $r^{th}$ round
of repair.  In the second phase, it further checks if the MDS property is
still satisfied after the $(r+1)^{th}$ round of repair for any possible node
failure, and this property is called the {\em repair MDS property}.
\item[(iii)]
If both phases are passed, then NCCloud writes the generated chunks to a new
node; otherwise, it repeats (i) and (ii) with another set of random chunks and
random encoding coefficients.
\end{enumerate}

We now formally define the repair MDS property.


\begin{definition}
\label{def:rbc}
{\em Repair-based collections (RBCs).}  An RBC of the $r^{th}$ round of
repair is a collection of $k(n-k)$ chunks that can be obtained after the
$r^{th}$ round of repair by the following
procedure.  (Step~1) We select any $n-1$ out of $n$ nodes. (Step 2) We select
$k-1$ out of the $n-1$ nodes found in Step 1 and collect $n-k$ chunks from
each selected node. (Step 3) We collect one chunk from each of the
non-selected $n-k$ nodes.  Clearly, the number of collected chunks is
$(k-1)(n-k)+(n-k) = k(n-k)$. $\hfill\Box$
\end{definition}

We can easily verify that there are
${{n}\choose{n-1}}{{n-1}\choose{k-1}}(n-k)^{n-k}$ different RBCs. Intuitively,
an RBC refers to a collection of chunks of $k$ nodes after the
$(r+1)^{th}$ round of repair for any possible node failure. For instance, after
repairing node~1 in Figure~\ref{fig:fmsr}, one example RBC is
$\SR = \{P'_{1,1},P_{3,1},P_{3,2},P_{4,1}\}$. This means that we assume:
node~2 is the failed node in the next round of repair; the failed node~2 will
be repaired by chunks $P'_{1,1}$, $P_{3,1}$ (or $P_{3,2}$), and $P_{4,1}$;
and we consider if the chunks of node~2 (after repair) and node~3 are
decodable.  Note that the chunks of node~2 and node~3 are linear combinations
of this RBC $\SR$. 

We assume that when a file is stored, it is first encoded using Reed-Solomon
codes, such that any $k(n-k)$ out of $n(n-k)$ (parity) chunks are decodable.
Note that these $k(n-k)$ chunks may reside in more than $k$ nodes (e.g.,
$P_{1,1}, P_{2,1}, P_{3,1}, P_{4,1}$ in Figure~\ref{fig:fmsr}).  If no repair
is carried out, then we ensure that every possible RBC is decodable.

However, after repairing a node failure, there exist some provably
non-decodable RBCs.  For example, in Figure~\ref{fig:fmsr}, the RBCs
$\{P'_{1,1}, P'_{1,2}, P_{2,1}, P_{3,1}\}$,
$\{P'_{1,1}, P'_{1,2}, P_{2,1}, P_{4,1}\}$, and
$\{P'_{1,1}, P'_{1,2}, P_{3,1}, P_{4,1}\}$ are non-decodable,
since $P'_{1,1}$ and $P'_{1,2}$ are linear combinations of
$P_{2,1}$, $P_{3,1}$, $P_{4,1}$.  Note that these non-decodable RBCs all
contain the chunks of the repaired node~1.  Each of these RBCs is a linear
combination of chunks $P_{2,1}$, $P_{3,1}$, $P_{4,1}$ (i.e., less than four
chunks) in the repair.  Accordingly, we define the following:

\begin{definition}
\label{def:ldc}
{\em Linear Dependent Collection (LDC)}.  Suppose an RBC of the $r^{th}$
round of repair contains the $n-k$ chunks of the repaired node that are
collected in Step~2 (see Definition~\ref{def:rbc}). If and only if every chunk
of this RBC is a linear combination of a set of less than $k(n-k)$ chunks of
the $r^{th}$ round of repair, we call it an LDC of the $r^{th}$ round of
repair.  $\hfill\Box$
\end{definition}

For example, in Figure~\ref{fig:fmsr}, the RBCs $\{P'_{1,1}, P'_{1,2},
P_{2,1}, P_{3,1}\}$, $\{P'_{1,1}, P'_{1,2}, P_{2,1}, P_{4,1}\}$, and
$\{P'_{1,1}, P'_{1,2}, P_{3,1}, P_{4,1}\}$ are the LDCs of the current round
of repair. 

\begin{definition}
\label{def:rmds}
{\em Repair MDS (rMDS) property.}  If all RBCs, after excluding the LDCs, of
the $r^{th}$ round of repair are decodable, then we say the rMDS property is
satisfied.  It means that if every RBC that is a linear combination of exactly
$k(n-k)$ chunks is always decodable, then we say that the rMDS property is
satisfied. 
$\hfill\Box$
\end{definition}

\begin{definition}
\emph{($n$,$k$)-FMSR codes}. An original file is stored in $n$ nodes in the
form of $n(n-k)$ chunks. If these $n(n-k)$ chunks satisfy both the MDS and
rMDS properties, then we say this file is FMSR-encoded.
\end{definition}

\textbf{Summary.} Authors of NCCloud \cite{Hu12} show via simulations that by
checking both the MDS and rMDS properties in each round of repair, FMSR codes
can preserve the MDS property after hundreds of rounds of repair.  Also, if we
check only the MDS property but not the rMDS property, then after some rounds
of repair we cannot regenerate the chunks that preserve the MDS property
within a fixed number of iterations (this is called the {\em bad} repair
\cite{Hu12}).  On the other hand, there is no formal theoretical analysis
showing the need of two-phase checking to preserve the MDS property after 
{\em any} number of rounds of repair.  Also, random FSMR codes repeat
two-phase checking until the valid chunks are regenerated.  This could involve
many iterations and significantly increase the repair time overhead (see
Section~\ref{sec:evaluation}).  In the following sections, we formally provide
the theoretical validation of existence of FMSR codes and the design of
deterministic FMSR codes.

\section{Existence}
\label{sec:existence}

We now prove the existence of FMSR codes.  In this work, we focus on $k=n-2$,
implying that FMSR codes are double-fault tolerant as conventional RAID-6 codes
\cite{raid6}.  Double-fault tolerance has been assumed in practical cloud
storage systems (e.g., GFS \cite{Ghemawat03} and Azure \cite{Calder11}).  Our
goal is to show that FMSR codes always maintain double-fault tolerance (i.e.,
the MDS property is always satisfied with $k=n-2$) after any number of rounds
of {\em uncoded} repair, while the repair bandwidth is kept at the MSR
point.

We first give three lemmas. Lemmas~\ref{lemma_LDC_number} and
\ref{lemma_LDC} provide a guideline of how to choose $n-1$ chunks from $n-1$
surviving nodes (one chunk from each node) to repair a failed node.
Lemma~\ref{lemma_SZ} implies that if the finite field size is large enough,
then we can always find a set of encoding coefficients to regenerate new
chunks for a repaired node so as to maintain the MDS and rMDS properties after
each round of repair.  Finally, we prove Theorem~\ref{theorem_fmsr} for the
existence of FMSR codes.

\begin{lemma}
\label{lemma_LDC_number}
In repair, let $\SF$ be the set of $n-1$ chunks selected from $n-1$ surviving
nodes to regenerate the $n-k$ chunks of the repaired node.  Also, let $\SQ$ be
the set of chunks collected in Step~3 of RBC construction (see
Definition~\ref{def:rbc}).  If an RBC (denoted by $\SR$) containing the $n-k$
chunks of the repaired node is an LDC, then $\SF$ and $\SQ$ must have two or
more common chunks.
\end{lemma}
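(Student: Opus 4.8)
The plan is to prove the lemma by a counting/dimension argument on the coefficient vectors, using only the repair formula~(\ref{fml:basic}) and the MDS property of the configuration before the current round. First I would name the pieces of the LDC $\SR$. Since $\SR$ contains the $n-k$ chunks of the repaired node (call it $l$) collected in Step~2 of Definition~\ref{def:rbc}, node~$l$ is one of the $k-1$ Step-2 nodes; let $\SG$ be the set of all $(k-2)(n-k)$ chunks of the other $k-2$ Step-2 nodes, let $\SQ$ be the $n-k$ Step-3 chunks, and let $w$ be the unique node left out in Step~1. Because $l$ is a Step-2 node, $w$ is a surviving node, so the $n-1$ chunks of $\SF$ — one per surviving node — split as: the chunk from $w$, which occurs in no chunk of $\SR$; the $k-2$ chunks from the other Step-2 nodes, each lying in $\SG$ (as $\SG$ contains all chunks of those nodes); and the $n-k$ chunks from the Step-3 nodes, each possibly equal or not to the $\SQ$-chunk of the same node. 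In particular $|\SF\cap\SG|=k-2$, and $|\SF\cap\SQ|$ counts the Step-3 nodes whose $\SF$-chunk equals their $\SQ$-chunk.

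By~(\ref{fml:basic}), each chunk of $l$ in $\SR$ is a linear combination of $\SF$, so every one of the $k(n-k)$ chunks of $\SR$ lies in the span of $\SU:=\SF\cup\SG\cup\SQ$. Counting (with $\SG\cap\SQ=\emptyset$ and $|\SF\cap(\SG\cup\SQ)|=(k-2)+|\SF\cap\SQ|$) gives $|\SU|=(n-1)+(k-1)(n-k)-(k-2)-|\SF\cap\SQ|=k(n-k)+1-|\SF\cap\SQ|$. Hence $\SU$ is a set of fewer than $k(n-k)$ chunks of the $r^{th}$ round of repair whose span contains all of $\SR$ exactly when $|\SF\cap\SQ|\ge 2$, and since Definition~\ref{def:ldc} calls $\SR$ an LDC precisely when such a small witnessing set exists, the case $|\SF\cap\SQ|\le 1$ must be ruled out — which is exactly the content of the lemma — provided no \emph{other}, smaller witnessing set can arise.

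Ruling out such a smaller witness is where the work is, and I expect it to be the main obstacle. To close it I would first note that the $(k-1)(n-k)$ surviving chunks $\SG\cup\SQ$ inside $\SR$ are linearly independent: recalling $k=n-2$, adjoining the $n-k$ \emph{unchosen} Step-3 chunks turns $\SG\cup\SQ$ into all chunks of a set of $k$ nodes, which is decodable by the MDS property, hence a basis. Therefore any witness $T$ with $|T|<k(n-k)$ must span the $(k-1)(n-k)$-dimensional space $\mathrm{span}(\SG\cup\SQ)$, leaving at most one extra dimension. When $|\SF\cap\SQ|\le 1$ I would then exhibit $\SU=\SF\cup\SG\cup\SQ$ as a valid RBC of the previous round that is \emph{not} an LDC of that round, so the rMDS property of the previous round gives $\mathrm{span}(\SU)=\BF_q^{k(n-k)}$; for the (generic, MDS-preserving) repair coefficients the $n-k$ new chunks of $l$ are then independent modulo $\mathrm{span}(\SG\cup\SQ)$, so $\SR$ has full rank $k(n-k)$ and cannot be an LDC, a contradiction. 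The genuinely delicate step is the verification that this RBC $\SU$ is not an LDC of the previous round — exactly the kind of combinatorial bookkeeping the companion Lemma~\ref{lemma_LDC} is designed to supply.
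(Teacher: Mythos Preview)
Your inclusion--exclusion count in the first two paragraphs is exactly the paper's proof: with your notation (the paper writes $\SP$ for your $\SG$), one shows $|\SF\cup\SG\cup\SQ|=k(n-k)+1-|\SF\cap\SQ|$, and the paper simply concludes from ``$\SR$ is an LDC'' that this canonical set has size $<k(n-k)$, hence $|\SF\cap\SQ|\ge 2$. That is the entire argument in the paper.

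Where you diverge is in your third paragraph. You read Definition~\ref{def:ldc} as ``there exists \emph{some} set of fewer than $k(n-k)$ chunks spanning $\SR$'', and therefore feel obliged to rule out witnesses other than $\SF\cup\SG\cup\SQ$. The paper does not do this: throughout the paper (see the examples after Definition~\ref{def:ldc}, the phrasing of Definition~\ref{def:rmds}, and the way Lemma~\ref{lemma_LDC_number} is invoked in Part~II of Theorem~\ref{theorem_fmsr}), ``LDC'' is used as a purely structural label --- an RBC is an LDC precisely when, after replacing the repaired-node chunks by the repair formula~(\ref{fml:basic}), the resulting set $\SF\cup\SG\cup\SQ$ of old chunks has size below $k(n-k)$. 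Under that reading the count \emph{is} the proof, and no alternative witnesses need to be excluded.

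Your attempt to close the stricter reading is also problematic on its own terms: you invoke the rMDS property of the \emph{previous} round and ``generic, MDS-preserving'' repair coefficients, neither of which is a hypothesis of Lemma~\ref{lemma_LDC_number}. The lemma is stated and used as a combinatorial fact about chunk indices, independent of the actual $\gamma_{i,j}$'s and of any inductive hypothesis; importing those assumptions would make the lemma circular when it is later used inside the induction of Theorem~\ref{theorem_fmsr}. So the right fix is not to strengthen the argument but to adopt the paper's structural reading of LDC, after which your first two paragraphs already finish the job.
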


\begin{proof}
%
Without loss of generality, let node~1 be the failed node. Let $\SP$ be the 
set of chunks collected in Step~2 of Definition~\ref{def:rbc} excluding
the $n-k$ chunks of the repaired node~1. Thus, 
$\SR = $ $\{P'_{1,1},\ldots,P'_{1,n-k}\} \cup \SP \cup \SQ$. 
As $P'_{1,1},\ldots,P'_{1,n-k}$ are obtained by linearly combining the
chunks in $\SF$, 
we infer that $\SR$ contains linear combinations of chunks in 
$\SF \cup \SP \cup \SQ$. 

Since $\SF$ selects one chunk from each of $n-1$ surviving nodes and $\SP$ has
all the chunks from $k-2$ surviving nodes, $\SF$ and $\SP$ have $k-2$
identical chunks, i.e., $|\SF \cap \SP| = k-2$. According to the given
conditions, we can easily have the following equalities: $|{\SF}|=n-1$,
$|\SP|=(k-2)(n-k)$, $|\SQ|= n-k$, $|\SP \cap \SQ| = |{\SF} \cap \SP \cap \SQ|
= 0$.  Finally we can have
$
|\SF \cup \SP \cup \SQ|= |{\SF}| + |\SP| + |\SQ|
- |{\SF} \cap \SP| - |{\SF} \cap \SQ| - |\SP \cap \SQ|
+ |{\SF} \cap \SP \cap \SQ|
= k(n-k) + 1 - |\SF\cap\SQ|. 
$
Since $\SR$ is an LDC, $|\SF \cup \SP \cup \SQ| < k(n-k)$. Hence, 
$|\SF\cap\SQ| \ge 2$.  Lemma~\ref{lemma_LDC_number} holds.
\end{proof}


\begin{lemma}
\label{lemma_LDC}
Suppose that the rMDS property is satisfied after every $r^{th}$ round of
repair.
Then for any $n-1$ out of $n$ nodes, we can always select one chunk from these
$n-1$ nodes (i.e., a total of $n-1$ chunks) such that any RBC containing
the selected $n-1$ chunks is decodable.
\end{lemma}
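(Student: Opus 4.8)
The plan is to choose the $n-1$ repair chunks so that none of them was read during the $r$-th round of repair; with that choice, Lemma~\ref{lemma_LDC_number} forces every RBC containing the chosen chunks to fall outside the family of LDCs, and decodability then follows from the assumed rMDS property.

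To set up, I would fix the $n-1$ nodes in the statement and let $v$ denote the remaining node, thought of as the node to be repaired in the $(r+1)$-th round. If $r=0$, no repair has occurred and every RBC is decodable, so assume $r\geq 1$; let $u$ be the node repaired in the $r$-th round and let $\SG$ be the set of $n-1$ chunks (one from each node other than $u$) read in that round's repair. Because $k=n-2$, each node stores exactly $n-k=2$ chunks, so for every node $c\neq u$ exactly one of its two chunks lies in $\SG$. I therefore select $\SF$ by taking from each of the $n-1$ nodes a chunk \emph{not} in $\SG$: the other chunk of $c$ when $c\neq u$, and an arbitrary chunk of $u$ when $u$ is among the $n-1$ nodes (allowed since $\SG$ contains no chunk of $u$). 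Thus $\SF\cap\SG=\emptyset$.

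Next I would argue that no RBC of the $r$-th round containing $\SF$ is an LDC. Let $\SR$ be any such RBC. By Definition~\ref{def:rbc} an RBC collects chunks from exactly the $n-1$ nodes chosen in Step~1, and since $\SR$ contains the chunk of $\SF$ from each of the $n-1$ chosen nodes, those Step-1 nodes are precisely the $n-1$ chosen nodes; hence $v$ is the node excluded in Step~1, and $\SR$ contains a chunk of $u$ only when $u\neq v$. If $u=v$, or if $u$ contributes only a single chunk to $\SR$ in Step~3, then $\SR$ does not contain the $n-k$ Step-2 chunks of the repaired node $u$, so by Definition~\ref{def:ldc} it is not an LDC. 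Otherwise $u$ is one of the Step-2 nodes of $\SR$, and Lemma~\ref{lemma_LDC_number} applies with its $\SF$ taken to be the round-$r$ repair set $\SG$: were $\SR$ an LDC, its Step-3 set $\SQ$ would satisfy $|\SG\cap\SQ|\geq 2$, hence $\SQ\subseteq\SG$ since $|\SQ|=n-k=2$. But each Step-3 node contributes a single chunk to $\SR$, which, as $\SR$ contains $\SF$, must equal the corresponding chunk of $\SF$; therefore $\SQ\subseteq\SF$, contradicting $\SF\cap\SG=\emptyset$. So $\SR$ is never an LDC, and by the rMDS property of the $r$-th round (Definition~\ref{def:rmds}) it is decodable. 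As $\SR$ was an arbitrary RBC containing the selected $n-1$ chunks, the lemma follows.

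The delicate point is the case split in the last step — whether $u$ coincides with $v$, and whether $u$ enters $\SR$ as a Step-2 or a Step-3 node — because Lemma~\ref{lemma_LDC_number} only constrains RBCs that already contain the repaired node's Step-2 chunks, and in the remaining situations $\SR$ fails the definition of an LDC outright. Beyond that bookkeeping and the identity $|\SQ|=n-k=2$, I do not anticipate a genuine obstacle.
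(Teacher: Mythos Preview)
Your argument is correct and follows essentially the same route as the paper's proof: choose the $n-1$ chunks to avoid the set read in the $r$-th repair (with an arbitrary choice on the repaired node), then use Lemma~\ref{lemma_LDC_number} to rule out LDCs and invoke the rMDS property. Apart from swapping the roles of the symbols $\SF$ and $\SG$ and making the case split (excluded node equals repaired node, repaired node in Step~3, repaired node in Step~2) more explicit than the paper does, the two proofs are the same.
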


\begin{proof}
Without loss of generality, suppose that we construct an RBC $\SR$ by selecting
the chunks from nodes~$2,\ldots,n$ (see Step~1 of Definition~\ref{def:rbc}),
and that $\SG$ be the set of $n-1$ chunks selected from nodes~$2,\ldots,n$
(one chunk from each nodes). We prove the existence of $\SG$ such that if 
$\SR$ contains $\SG$ (i.e., $\SG\subset\SR$), then $\SR$ is decodable. 

If node~1 is the repaired node in the $r^{th}$ round of repair, then $\SR$ is
never an LDC (by Definition~\ref{def:ldc}).  Since the rMDS property is
satisfied by our assumption, $\SR$ is decodable (by
Definition~\ref{def:rmds}). 
   
If node~1 is not the repaired node in the $r^{th}$ round of repair, then
without loss of generality, let node~2 be the repaired node.  By the FMSR
design, the chunks of node~2 are linearly combined by one chunk in each of
nodes~$1, 3, \ldots, n$.  We denote these chunks by 
$\SF = \{P_{1,f(1)},P_{3,f(3)},\ldots,P_{n,f(n)}\}$. Since each node has 
$n-k>1$ chunks, we can construct $\SG = P_{2,g(2)},\ldots,P_{n,g(n)}\}$ such
that $g(i)\neq f(i)$ for $i=3,\ldots,n$ (while $g(2)$ can be randomly picked).
If $\SR$ contains $\SG$, then in Step~3 of RBC construction (see
Definition~\ref{def:rbc}), at least one chunk must be selected from $\SG$.
However,
$\SG$ has no identical chunk with $\SF$.  By Lemma~\ref{lemma_LDC_number},
$\SR$ is not an LDC. Since the rMDS property is satisfied, $\SR$ is decodable. 
\end{proof}

\begin{lemma}
\label{lemma_SZ}
(Schwartz-Zippel Theorem) \cite{SZtheorem}.  Consider a multivariate non-zero
polynomial $h(x_1, \ldots, x_t)$ of total degree $\rho$ over a finite field
$\BF$. Let $\BS$ be a finite subset of $\BF$, and
$\tilde{x}_1, \ldots, \tilde{x}_t$ be the values randomly selected from $\BS$.
Then the probability $\Pr[h(\tilde{x}_1, \ldots, \tilde{x}_t) = 0] \leq
\frac{\rho}{|\BS|}$.
\end{lemma}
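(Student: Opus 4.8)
The plan is to prove Lemma~\ref{lemma_SZ} by induction on the number of variables $t$, i.e. the standard argument for the Schwartz--Zippel lemma; since it is cited here from \cite{SZtheorem} as a ready-made tool, I would keep the proof short (or omit it entirely).

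For the base case $t=1$, I would simply note that a non-zero univariate polynomial of degree at most $\rho$ has at most $\rho$ roots in $\BF$, so at most $\rho$ of the $|\BS|$ equally likely choices of $\tilde{x}_1$ can be roots, giving $\Pr[h(\tilde{x}_1)=0]\le \rho/|\BS|$.

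For the inductive step I would pick a variable, say $x_1$, that actually appears in $h$ (if none does, $h$ is a non-zero constant, $\rho=0$, and the bound is trivial) and write
\[
h(x_1,\ldots,x_t)=\sum_{i=0}^{d} x_1^{\,i}\,h_i(x_2,\ldots,x_t),
\]
where $d\ge 1$ is the top power of $x_1$, so $h_d$ is non-zero with total degree at most $\rho-d$. Letting $A=\{h_d(\tilde{x}_2,\ldots,\tilde{x}_t)=0\}$, the induction hypothesis gives $\Pr[A]\le(\rho-d)/|\BS|$; on $\overline{A}$ the polynomial $h(x_1,\tilde{x}_2,\ldots,\tilde{x}_t)$ is non-zero of degree exactly $d$ in $x_1$, and because $\tilde{x}_1$ is drawn independently of the other coordinates the base case gives $\Pr[h=0\mid\overline{A}]\le d/|\BS|$. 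Combining these by the law of total probability yields
\[
\Pr[h(\tilde{x}_1,\ldots,\tilde{x}_t)=0]\le \Pr[A]+\Pr[h=0\mid\overline{A}]\le \frac{\rho-d}{|\BS|}+\frac{d}{|\BS|}=\frac{\rho}{|\BS|}.
\]

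There is no real obstacle: the only care needed is to split along a variable for which the leading coefficient polynomial $h_d$ is genuinely non-zero, to check that $\deg h_d\le\rho-d$, and to use independence of the $\tilde{x}_i$ when conditioning on $\overline{A}$. Downstream, this lemma is what lets the existence argument (Theorem~\ref{theorem_fmsr}) go through: a suitable non-zero polynomial in the encoding coefficients whose non-vanishing certifies the MDS and rMDS properties is shown to be non-zero, and then a large enough $\BF_q$ guarantees a choice of coefficients at which it does not vanish.
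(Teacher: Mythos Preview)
Your proof is the standard, correct induction argument for Schwartz--Zippel, and you have already anticipated the paper's treatment: the paper states this lemma with a citation and gives no proof at all. So your proposal (either the short induction or simply omitting the proof) matches the paper's approach exactly.
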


\begin{theorem}
\label{theorem_fmsr}
Consider a file encoded using FMSR codes with $k=n-2$.  In the $r^{th}$ ($r\ge
1$) round of uncoded repair of some failed node $j$, the lost chunks are
reconstructed by the random linear combination of $n-1$ chunks selected from
$n-1$ surviving nodes (one chunk from each node).  Then after the repair, the
reconstructed file still satisfies both the MDS and rMDS properties with
probability that can be driven arbitrarily to 1 by increasing the field size
of $\BF_q$.
\end{theorem}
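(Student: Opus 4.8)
The plan is to set up a polynomial-identity argument and apply the Schwartz--Zippel lemma (Lemma~\ref{lemma_SZ}). First I would fix the round $r$ and assume inductively that the file stored before this round is FMSR-encoded, i.e., it satisfies both the MDS and rMDS properties. The repair of node $j$ proceeds by first choosing, for each surviving node $i$, which chunk $P_{i,f(i)}$ to read; by Lemma~\ref{lemma_LDC} we may make this choice so that, for every one of the ${n \choose n-1}$ ways of omitting a node, there is a ``good'' selection of $n-1$ chunks avoiding the $\SF$-set of any node that might be repaired next. Then the $n-k=2$ new chunks $P'_{j,1},P'_{j,2}$ are formed as random linear combinations $\sum_{i\ne j}\gamma_{i,m}P_{i,f(i)}$. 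The idea is to treat the coefficients $\gamma_{i,m}$ (and possibly the still-free encoding coefficients of the new chunks relative to the native chunks, which are linear in the $\gamma_{i,m}$) as formal indeterminates.

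Next I would enumerate the finitely many conditions that must hold after the repair: (i) for each of the ${n \choose k}$ subsets of $k$ nodes containing node $j$, the $k(n-k)$ coefficient vectors must be linearly independent (MDS property --- subsets not containing $j$ are unaffected and already hold by induction); (ii) for each RBC of round $r$ that is not an LDC, its $k(n-k)$ coefficient vectors must be linearly independent (rMDS property). Each such condition is the non-vanishing of the determinant of a $k(n-k)\times k(n-k)$ matrix whose entries are affine-linear in the indeterminates $\gamma_{i,m}$, hence each determinant is a polynomial of bounded total degree (at most $k(n-k)$, or a small multiple thereof once one tracks the native-chunk coefficients). The crucial point is that each of these determinant polynomials is \emph{not identically zero}: for this I would exhibit, for each condition, one explicit assignment of the $\gamma_{i,m}$ making that determinant nonzero. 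For an MDS subset containing $j$, one uses that before repair the corresponding ``pre-repair'' collection of $k(n-k)$ chunks (with $P_{i,f(i)}$ in place of node $j$'s chunks) is an RBC that is not an LDC --- here Lemma~\ref{lemma_LDC} guarantees the chunk-reading function $f$ was chosen precisely so this collection is decodable, hence that pre-repair matrix is invertible, and choosing $\gamma$ to make $P'_{j,1},P'_{j,2}$ essentially equal to two of the read chunks makes the post-repair determinant nonzero. For an rMDS condition, an RBC that is not an LDC and does not contain node $j$'s new chunks is unaffected; one that contains them becomes, after substituting the definition of $P'_{j,m}$, a collection of linear combinations of $k(n-k)$ chunks that form a non-LDC RBC of the previous configuration, so again a suitable $\gamma$-assignment works by the inductive rMDS hypothesis together with Lemma~\ref{lemma_LDC_number} (which tells us exactly when the substitution collapses the rank).

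Having shown each of the finitely many ``bad-event'' polynomials $h_1,\dots,h_N$ is nonzero, I would let $h=\prod_\ell h_\ell$, a nonzero polynomial of total degree $D$ bounded by $N\cdot k(n-k)$ (a finite quantity depending only on $n$), and apply Schwartz--Zippel: choosing the $\gamma_{i,m}$ uniformly at random from $\BF_q$ makes $\Pr[h=0]\le D/q$, so with probability at least $1-D/q$ all MDS and rMDS conditions hold simultaneously, and this probability tends to $1$ as $q\to\infty$. I expect the main obstacle to be step two: carefully justifying that \emph{every} relevant determinant polynomial is not identically zero, since this requires correctly identifying which RBCs are LDCs (and therefore excluded), showing the remaining post-repair collections can be ``specialized'' back to pre-repair collections covered by the inductive hypothesis, and checking that the specialization indeed yields a nonzero value --- this is where Lemmas~\ref{lemma_LDC_number} and~\ref{lemma_LDC} do the real work, and where the restriction $k=n-2$ (so $n-k=2$, giving each node exactly two chunks and hence the freedom to pick $g(i)\ne f(i)$) is essential.
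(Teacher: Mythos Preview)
Your plan matches the paper's proof: induction on $r$, use Lemma~\ref{lemma_LDC} to pick the chunk-selection function $f$ (this is exactly the ``Corollary'' stated inside the paper's proof), express each MDS/rMDS condition as the non-vanishing of a determinant polynomial in the $\gamma_{i,m}$, argue each polynomial is not identically zero by reducing to a decodable collection from the inductive hypothesis, and finish with Schwartz--Zippel. Your product-of-polynomials formulation is in fact slightly cleaner than the paper, which applies Lemma~\ref{lemma_SZ} condition-by-condition.

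One point where your sketch is looser than the paper and would need sharpening: in the rMDS part you say that an RBC containing the new chunks, after substituting $P'_{j,m}=\sum_i\gamma_{i,m}P_{i,f(i)}$, becomes ``a collection of linear combinations of $k(n-k)$ chunks that form a non-LDC RBC of the previous configuration.'' That is not literally true. The substituted support set (the paper's $\SX$ or $\SY$) generically has $2k+1$ chunks, not $2k$, and is not itself an RBC. The paper handles this by a case split on how many of the Step-3 chunks coincide with $\SF$: in two of the sub-cases the extra chunk collapses and one is left with a genuine RBC containing $\SF$ (decodable by the Corollary); in the remaining sub-case the support has $2k+1$ chunks but $2k$ of them already form a decodable set by the inductive \emph{MDS} (not rMDS) hypothesis, so the $(2k{+}1)$st chunk is redundant. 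You correctly flag this reduction as the main obstacle, but your one-line description of it is not accurate as stated; the actual argument needs both the rMDS hypothesis (via the Corollary) and the MDS hypothesis, depending on the case.
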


\begin{proof}
We prove by induction on $r$. Initially, we use Reed-Solomon codes to
encode a file into $n(n-k) = 2n$ chunks that satisfy both the MDS and rMDS
properties.  Suppose that after the $r^{th}$ round of repair, both the MDS and
rMDS properties are satisfied (this is our induction hypothesis).

Let $\SU_r=\{P_{1,1},P_{1,2};\ldots;P_{k+2,1}, P_{k+2,2}\}$ be the current set
of chunks after the $r^{th}$ round of repair.  In the $(r+1)^{th}$ round of
repair, without loss of generality, let node~1 be the failed node to repair.
Since $\SU_r$ satisfies the rMDS property, we have the following corollary by
Lemma~\ref{lemma_LDC}.

{\bf Corollary.} There exists a set of $n-1$ chunks, denoted by $\SF =
\{P_{2,f(2)}, \ldots, P_{k+2,f(k+2)}\}$, selected from nodes~$2, \ldots, n$,
such that any RBC containing $\SF$ is decodable. 

We use $\SF$ to repair node~1.  Suppose that the repaired node~1 has the new
chunks $\{P'_{1,1}, P'_{1,2}\}$. Then:
\begin{equation}
\label{theorem_repaired_chunks}
P'_{1,j} = \sum_{i=2}^{k+2} \gamma_{i,j} P_{i,f(i)}, \textrm{ for $j=1,2.$}
\end{equation}

Next we prove that we can always tune $\gamma_{i,j}$ in $\BF_q$ in such a way
that the set of chunks in the $(r+1)^{th}$ round of repair
$\SU_{r+1}=\{P'_{1,1}, P'_{1,2}; \ldots; P_{k+2,1}, P_{k+2,2}\}$ still
satisfies both MDS and rMDS properties. The proof consists of two parts.

\textbf{Part I: ${\cal{U}}_{r+1}$ satisfies the MDS property.} Since $\SU_r$
satisfies the MDS property, we only need to ensure that for any $k-1$
surviving nodes, say for any subset $\{s_1, \dots, s_{k-1}\}$ $\subseteq$
$\{2, \dots, n\}$, all the chunks of nodes $s_1, \dots, s_{k-1}$ and the
repaired node 1 are decodable. Without loss of generality, let
$(s_1, \dots, s_{k-1})$ $=$ $(2,\dots,k)$, and other cases are symmetric.

Let $\SV=\{P_{2,1}, P_{2,2};$$\ldots;$$P_{k,1}, P_{k,2};$$P'_{1,1},
P'_{1,2}\}$ be the set of chunks of nodes~1 to $k$. By
Equation~(\ref{theorem_repaired_chunks}), each chunk of $\SV$ is a linear
combination of a certain RBC, denoted by $\SR = $
$\{P_{2,1}, P_{2,2};\ldots;P_{k,1},P_{k,2};P_{k+1,f(k+1)},P_{k+2,f(k+2)}\}$.
Mathematically, we express as:
\[
\left[
\begin{array}{c}
P_{2,1} \\
P_{2,2} \\
\ldots \\
P_{k,1} \\
P_{k,2} \\
{P'}_{1,1} \\
{P'}_{1,2} \\
\end{array}
\right]
=
\mathbf{A}
\times
\left[
\begin{array}{c}
P_{2,1} \\
P_{2,2} \\
\ldots \\
P_{k,1} \\
P_{k,2} \\
P_{k+1,f(k+1)} \\
P_{k+2,f(k+2)}
\end{array}
\right]\\,
\]
where $\mathbf{A}$ is a $k(n-k)\times k(n-k)$ (i.e., $2k\times 2k$)
encoding matrix given by $\bA = $
\[
\left(
\begin{array}{c@{\ }c@{\ }c@{\ }c}
1,0,&\cdots,&0,0,&0,0\\
0,1,&\cdots,&0,0,&0,0\\
\vdots&  \ddots&  \vdots&\vdots \\
0,0,&\cdots,&1,0,&0,0\\
0,0,&\cdots,&0,1,&0,0\\
\delta_{2,1}\gamma_{2,1},\delta_{2,2}\gamma_{2,1},&\cdots,&\delta_{k,1}\gamma_{k,1},\delta_{k,2}\gamma_{k,1},&\gamma_{k+1,1},\gamma_{k+2,1}\\
\delta_{2,1}\gamma_{2,2},\delta_{2,2}\gamma_{2,2},&\cdots,&\delta_{k,1}\gamma_{k,2},\delta_{k,2}\gamma_{k,2},&\gamma_{k+1,2},\gamma_{k+2,2}
\end{array}
\right)
\]
where $\delta_{i,1}=1$ and $\delta_{i,2}=0$ when $f(i)=1$, and
$\delta_{i,1}=0$ and $\delta_{i,2}=1$ when $f(i)=2$.  Since $\SR$ is an RBC
containing $\SF$, it is
decodable due to Lemma~\ref{lemma_LDC}.  In addition, the determinant
det($\bA$) is a multivariate polynomial in terms of variables $\gamma_{i,j}$.
By Lemma~\ref{lemma_SZ}, the value of det($\bA$) is non-zero, with probability
driven to 1 if we increase the finite field size.  Now since $\SR$ is
decodable and $\bA$ has a full rank, $\SV$ is decodable.  This implies that
$\SU_{r+1}$ satisfies the MDS property.


\textbf{Part II: ${\cal{U}}_{r+1}$ satisfies the rMDS property.} By
Definition~\ref{def:rmds}, we need to prove that all the RBCs of
$\SU_{r+1}$ except the LDCs are decodable. By Definition~\ref{def:rbc}, we
consider two cases of RBCs. Without loss of generality, we let node~1 be the
repaired node.


{\em Case 1}: The repaired node~1 is selected in Step~2. Suppose in
Step~1, an RBC selects any $n-2=k$ surviving nodes, say $\{s_1, \dots,
s_{k}\}\subseteq \{2, \dots, n\}$. Then in Step~2, the RBC further selects any
subset of $k-2$ nodes, say nodes~$s_1,\dots,s_{k-2}$.  The RBC now contains
all the chunks of node~$s_1,\dots,s_{k-2}$ and the repaired node~1. Finally,
in Step~3, the RBC collects two chunks, denoted by $P_{s_{k-1},g(s_{k-1})}$
and $P_{s_k,g(s_k)}$ from the remaining two nodes~$s_{k-1}$ and $s_{k}$,
respectively.  Without loss of generality, let
$(s_1, \dots, s_{k-2}) = (2,\dots,k-1)$ and $(s_{k-1},s_k) = (k,k+1)$.

Denote the RBC by $\SR_1 = \{P_{2,1}, P_{2,2};\ldots;P_{k-1,1},P_{k-1,2};$
$P'_{1,1}, P'_{1,2};P_{k,g(k)},P_{k+1,g(k+1)}\}$.
In addition, by Equation~(\ref{theorem_repaired_chunks}), the chunks of
$\SR_1$ are linear combinations of a set of chunks denoted by
$\SX = \{P_{2,1},P_{2,2};\ldots;P_{k-1,1},P_{k-1,2};$
$P_{k,g(k)},P_{k,f(k)};P_{k+1,g(k+1)},P_{k+1,f(k+1)};P_{k+2,f(k+2)}\}$.

Our goal is to show that if $\SR_1$ is not an LDC, then it is decodable.
By Lemma~\ref{lemma_LDC_number}, we know that if $\SR_1$ is an
LDC, then there are at least two chunks selected in Step 3 that
belong to $\SF = \{P_{2,f(2)}, \ldots, P_{n,f(n)}\}$ (which are used to
regenerate chunks for node~1), or equivalently, $g(k)=f(k)$ and
$g(k+1)=f(k+1)$.  Therefore, to prove that $\SR_1$ except the LDCs is
decodable, it is equivalent to prove that $\SR_1$ is decodable when
(a) $g(k) \neq f(k)$ and $g(k+1) = f(k+1)$, (b) $g(k) = f(k)$ and $g(k+1)
\neq f(k+1)$, or (c) $g(k) \neq f(k)$ and $g(k+1) \neq f(k+1)$.

First consider (a). We can reduce $\SX$ to
$\{P_{2,1},P_{2,2};\ldots;$
$P_{k-1,1},P_{k-1,2};P_{k,1},P_{k,2};P_{k+1,f(k+1)},P_{k+2,f(k+2)}\}$.
The above collection is an RBC containing $\SF$. By our corollary, the
collection is decodable.  Therefore, $\SR_1$ is linear combination of a
decodable collection. Then we can use the similar method in Part~I to prove
that there always exists an assignment of $\gamma_{i,j}$ in a sufficiently
large field such that $\SR_1$ is decodable (by Lemma~\ref{lemma_SZ}).  
The proof of (b) is similar to that of (a) and is thus omitted.

Lastly, let us consider (c). Now, $\SX$ can be written as
$\{P_{2,1},P_{2,2};\ldots;P_{k-1,1},P_{k-1,2};P_{k,1},P_{k,2};P_{k+1,1},P_{k+1,2};$ $P_{k+2,f(k+2)}\}$.
Define $\overline{\SX} = \SX - \{P_{k+2,f_{k+2}}\}$. Note that the MDS
property of $\overline{\SX}$ is satisfied by induction hypothesis. Thus,
$\overline{\SX}$ is decodable, implying that $P_{k+2,f(k+2)}$ can be seen as
a linear combination of $\overline{\SX}$.  Obviously, we can also say that
$\SX$ is formed by linear combinations of $\overline{\SX}$. Therefore, $\SR_1$
is also formed by linear combinations of the decodable collection
$\overline{\SX}$. Based on the above argument, $\SR_1$ is decodable.


\emph{Case 2}: The repaired node 1 is selected in Step 3. Suppose in Step~1,
the RBC selects any $n-2=k$ surviving nodes, say
$\{s_1,\dots,s_{k}\}\subseteq \{2,\dots,n\}$. Then in Step~2, the RBC further
selects any subset of $k-1$ nodes, say ${s_1,\dots,s_{k-1}}$ to collects all
the chunks of nodes~$s_1,\dots,s_{k-1}$.  Finally, in Step~3, the RBC collects
two chunks $P'_{1,g(1)}$ and $P_{s_k,g(s_k)}$ from the repaired node 1 and the
last selected node~$s_{k}$, respectively. Without loss of generality, let
$(s_1,\dots,s_{k-1}) = (2,\dots,k)$ and $s_{k}=k+1$.

Denote the RBC by $\SR_2=\{P_{2,1},P_{2,2};\ldots;P_{k,1},P_{k,2};$
$P'_{1,g(1)},P_{k+1,g(k+1)}\}$.  We need to show that if $\SR_2$ is not an LDC,
it is decodable.  Based on Lemma~\ref{lemma_LDC_number}, there is no more than
one identical chunk between $\SF$ and the RBC's chunks collected in Step 3, so
$\SR_2$ is never an LDC. We only need to prove that every possible $\SR_2$ is
decodable.

By Equation~(\ref{theorem_repaired_chunks}), the chunks of $\SR_2$ are linear
combinations of a set of chunks denoted by $\SY =
\{P_{2,1},P_{2,2};\ldots;P_{k,1},P_{k,2};$
$P_{k+1,g(k+1)},P_{k+1,f(k+1)};P_{k+2,f(k+2)}\}$.  Suppose $g(k+1) \neq
f(k+1)$.  Define $\overline{\SY} = \SY - \{P_{k+1,g(k+1)}\}$.
Since $\overline{\SY}$ is an RBC containing $\SF$, by our corollary,
$\overline{\SY}$ is decodable.  Therefore, $P_{k+1,g(k+1)}$ can be seen as a
linear combination of $\overline{\SY}$. Obviously, we can also say $\SY$ is a
linear combination of $\overline{\SY}$. Therefore, $\SR_2$ is also linear
combination of the decodable collection $\overline{\SY}$.  Similar to the
above arguments, $\SR_2$ is decodable. If $g(k+1) = f(k+1)$, the proof is
similar and is thus omitted.


Combining Case 1 and Case 2, we deduce that all RBCs excluding the LDCs are
decodable. So $\SU_{r+1}$ satisfies the rMDS property. Therefore,
Theorem~\ref{theorem_fmsr} concludes.
\end{proof}

\section{Deterministic FMSR Codes}
\label{sec:deterministic}

In NCCloud \cite{Hu12}, the repair operation under FMSR codes is accomplished
based on two random processes: (i) using random chunk selection to read chunks
from the surviving nodes and (ii) applying random linear combinations of the
selected chunks to generate new chunks for the repaired node.
Section~\ref{sec:existence} has proved the correctness of the random-based
repair operation by virtue of existence of FMSR codes. On the other hand, a
drawback of the random approach is that it may need to try many iterations to
generate the correct set of chunks that satisfies both the MDS and rMDS
properties.

In this section, we propose a deterministic repair scheme under FMSR codes
($k=n-2$), such that both the chunk selection and linear combination
operations are deterministic. This enables us to significantly speed up the
repair operation.  In our deterministic scheme, we specify which particular
chunk should be read from each surviving node in each round of repair. We also
derive the sufficient conditions on which the encoding coefficients should
satisfy.  To design the deterministic scheme, we first introduce an evolved
repair MDS property.

\begin{definition}
\label{def:ermds}
\emph{Evolved Repair MDS (erMDS) property.} Let $k=n-2$. For any $k+1$ out of
$n$ nodes, if we can always select one specific chunk from each of the $k+1$
nodes such that any RBC containing these selected $k+1$ chunks is decodable,
then we say the code scheme has the erMDS property.  $\hfill\Box$
\end{definition}

From Lemma~\ref{lemma_LDC}, we can see that if the rMDS property is satisfied,
then the erMDS property is also satisfied. Thus, any RBCs satisfying the
rMDS property is a subset of the RBCs satisfying the erMDS property. We use
the erMDS property to construct a deterministic FMSR code. 


To construct deterministic FMSR codes for $k=n-2$, we describe how we store a
file and how we trigger the $r^{th}$ ($r\ge 1$) round of repair for a node
failure.  The correctness of our deterministic FMSR codes is proved in
Appendix. 

\textbf{Storing a file.} We divide a file into $k(n-k) = 2k$
equal-size native chunks, and encode them into $n(n-k) = 2(k+2)$ parity chunks
denoted by $P_{1,1},P_{1,2};$ $\ldots;$ $P_{k+2,1},P_{k+2,2}$ using
Reed-Solomon codes, such that any $2k$ out of $2(k+2)$ chunks are decodable to
the original file.  Each node~$i$ (where $i=1,2,\ldots,k+2$) stores two chunks
$P_{i,1}$ and $P_{i,2}$.  Clearly, the generated parity chunks satisfy the MDS
property (see Definition~\ref{def:mds}), i.e., for any $k$ out of $n$ nodes 
$\{s_1,\ldots,s_k\} \subset \{1,\ldots,k+2\}$, the $2k$ parity chunks
$\{P_{s_1,1},P_{s_1,2};\ldots;P_{s_{k},1},P_{s_{k},2}\}$ are decodable.
In addition, the generated parity chunks also satisfy the erMDS property (see
Definition~\ref{def:ermds}), i.e., for any $k+1$ nodes $s_1,\ldots,s_{k+1}$,
we can always select some specific chunks $P_{s_1,f(s_1)},\ldots,
P_{s_{k+1},f(s_{k+1})}$ such that any RBC containing them is decodable.
Here, we need to find and record such $k+1$ specific chunks for any $k+1$
nodes.  For illustrative purposes, we let $f(s_i)=1$, where
$i=1,2,\ldots,k+1$, so we record the chunks
$\{P_{s_1,1},\ldots,P_{s_{k+1},1}\}$. 

\textbf{The first round of repair.} Suppose without loss of generality that
node~1 fails and is then repaired by two steps.

\emph{Step 1: (Chunk selection)}. We select $k+1$ chunks $P_{2,1},$ $\ldots,$
$P_{k+2,1}$ that are recorded when the file is stored. 

\emph{Step 2: (Coefficient construction)}. For each selected chunk 
$P_{i',1}$ ($i' = 2,\ldots, k+2$), we compute $2k$ coefficients
$\lambda_{i,j}^{(i')}$ ($i=2,\ldots,k+2$, $i\neq i'$, $j=1,2$) which satisfy
\begin{equation}
\label{equ:lambda1}
P_{i',1}=\sum_{i=2,i \neq i'}^{k+2} \sum_{j=1}^2 \lambda_{i,j}^{(i')} P_{i,j}.
\end{equation}

Each parity chunk is a linear combination of $k(n-k)=2k$ native chunks (see
Section~\ref{sec:model}). By equating the coefficients that are multiplied
with the $2k$ native chunks on both left and right sides of
Equation~(\ref{equ:lambda1}), we obtain $2k$ equations, which allow us to
solve for $\lambda_{i,j}^{(i')}$. 

Next we need to construct the coefficients $\gamma_{i,1}$ and $\gamma_{i,2}$
which satisfy the following inequalities~(\ref{fml:deterministic_condition_1}),~(\ref{fml:deterministic_condition_2}), and~(\ref{fml:deterministic_condition_3}):
\begin{equation}
\label{fml:deterministic_condition_1}
\gamma_{i,1} \gamma_{j,2} \neq \gamma_{i,2} \gamma_{j,1},
\end{equation}
where $i \neq j$ and $i,j=2,3,\ldots,k+2$; 
\begin{equation}
\label{fml:deterministic_condition_2}
\gamma_{i,2} + \gamma_{i',2} \lambda_{i,1}^{(i')} \neq 0,
\end{equation}
where $i \neq i'$ and $i,i'\in\{2,\ldots,k+2\}$; and
\begin{equation}
\label{fml:deterministic_condition_3}
\begin{array}{l}
(\gamma_{i,1} + \gamma_{i'',1}\lambda_{i,1}^{(i'')})
(\gamma_{i',2}+ \gamma_{i'',2}\lambda_{i',1}^{(i'')})
\neq \\
(\gamma_{i',1}+ \gamma_{i'',1}\lambda_{i',1}^{(i'')})
(\gamma_{i,2} + \gamma_{i'',2}\lambda_{i,1}^{(i'')}),
\end{array}
\end{equation}
where $i$, $i'$ and $i''$ are distinct, $i,i',i'' \in \{2,\ldots,k+2\}$. We
can then construct the coefficients $\gamma_{i,1}$ and $\gamma_{i,2}$, and 
by Lemma~\ref{lemma_SZ} the solution exists if the finite field size is
large enough. 
Lastly, we regenerate new chunks $P'_{1,1}$ and $P'_{1,2}$ as follows:
\begin{equation}
\label{fml:deterministic_combination_1}
P'_{1,1}=\gamma_{2,1} P_{2,1}+\gamma_{3,1} P_{3,1}+\ldots+\gamma_{k+2,1} P_{k+2,1},
\end{equation}
\begin{equation}
\label{fml:deterministic_combination_2}
P'_{1,2}=\gamma_{2,2} P_{2,1}+\gamma_{3,2} P_{3,1}+\ldots+\gamma_{k+2,2} P_{k+2,1}.
\end{equation}

\textbf{The $r^{th}$ round of repair ($r>1$)}. If the failed node in the
$r^{th}$ round of repair is the repaired node in the $(r-1)^{th}$ round of
repair, then we just repeat the $(r-1)^{th}$ repair. Otherwise, we 
select the $k+1$ chunks that are {\em different} from those selected
in the $(r-1)^{th}$ round of repair.  For example, in Figure~\ref{fig:fmsr},
if in the next round of repair the failed node remains node~1, then $P_{2,1}$,
$P_{3,1}$, and $P_{4,1}$, which have been selected in Figure~\ref{fig:fmsr},
are selected for the next repair. If the failed node is node~2, then we should
select $P_{1,1}$ (or $P_{1,2}$), $P_{3,2}$, and $P_{4,2}$.  
Then similar to the first round of
repair, we generate the coefficients that satisfy inequalities likewise
in (\ref{fml:deterministic_condition_1}),
(\ref{fml:deterministic_condition_2}), and 
(\ref{fml:deterministic_condition_3}). Finally, we regenerate the new chunks
accordingly as (\ref{fml:deterministic_combination_1}) and
(\ref{fml:deterministic_combination_2}).

\section{Evaluation}
\label{sec:evaluation}


In this section, we evaluate the repair performance of two implementations of
FMSR codes: (i) {\em random FMSR codes}, which use random chunk selection in
repair and is used in NCCloud \cite{Hu12} and (ii) {\em deterministic FMSR
codes}, which use deterministic chunk selection proposed in
Section~\ref{sec:deterministic}.  We show that our proposed deterministic FMSR
codes can significantly reduce the time required to regenerate parity chunks
in repair.

We implement both versions of FMSR codes in C. We implement finite-field
arithmetic operations over a Galois Field GF($2^8$) based on the standard
table lookup approach \cite{Greenan08}.  We conduct our evaluation on a
server running on an Intel CPU at 2.4GHz.  We consider different values of $n$
(i.e., the number of nodes).  For each $n$, we first apply Reed-Solomon codes
to generate the encoding coefficients that will be used to encode a file into
parity chunks before uploading.  In each round of repair, we randomly pick a
node to fail. We then repair the failed node using two-phase checking, based
on either random or deterministic FMSR code implementations.  The failed node
that we choose is different from that of the previous round of repair, so as
to ensure a different chunk selection in each round of repair.  We conduct 50
rounds of repair in each evaluation run.  We conduct a total of 30 runs over
different seeds for each $n$.

The metric we are interested in is the checking time spent on determining if
the chunks selected from surviving nodes can be used to regenerate the lost
chunks.  We do not measure the times of reading or writing chunks, as they are
the same for both random and deterministic FMSR codes. Instead, we focus on
measuring the processing time of two-phase checking in each round of repair.
It is important to note that two-phase checking only operates on encoding
coefficients, and is independent of the size of the file being encoded.  Note
that we do not specifically optimize our encoding operations, but we believe
our results provide fair comparison of both random and deterministic FMSR
codes using our baseline implementations.

Figure~\ref{fig:agg_time} first depicts the aggregate checking times for a
total of 50 rounds of repair versus the number of nodes when using random and
deterministic FMSR codes.  The aggregate checking time of random FMSR codes is
small when $n$ is small (e.g., less than 1~second for $n\le 6$), but
exponentially increases as $n$ is large.  On the other hand, the aggregate
checking time of deterministic FMSR codes is significantly small (e.g., within
0.2~seconds for $n\le 10$).

\begin{figure}[t]
\centering
\includegraphics[width=2.3in]{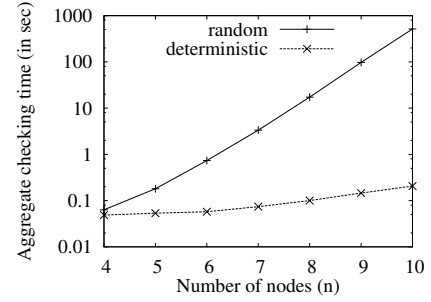}
\caption{Aggregate checking time of 50 rounds of repair (y-axis is in log
scale).}
\label{fig:agg_time}
\vspace{-1em}
\end{figure}

Our investigation finds that the checking time of random FMSR codes increases
dramatically as the value of $n$ increases. For example, when $n=12$ (not
shown in our figures), we find that the repair operation of our random FMSR
code implementation still cannot return a right set of regenerated chunks
after running for two hours.  In contrast, our deterministic FMSR codes can 
return a solution within 0.5~seconds. 

To further examine the significant performance overhead of random FMSR codes,
Figures~\ref{fig:inst_time} and \ref{fig:inst_iter} show the cumulative
checking time and number of two-phase checkings performed for $r$ rounds of
repair, respectively, for $n=8,9,10$.  We note that random FMSR codes incur
a fairly large but constant number of two-phase checkings in each round of
repair.  For example, for $n=10$, each round of repair takes around 100
iterations of two-phase checkings (see Figure~\ref{fig:inst_iter}(a)).  On the
other hand, deterministic FMSR codes significantly reduce the number of
iterations of two-phase checking (e.g., less than 2.5 on average for $n=10$).
{\em In summary, our evaluation results show that deterministic FMSR codes
significantly reduce the two-phase checking overhead of ensuring that the MDS
property is preserved during repair.}

\begin{figure}[t]
\centering
\begin{tabular}{c@{\ }c}
\includegraphics[width=1.7in]{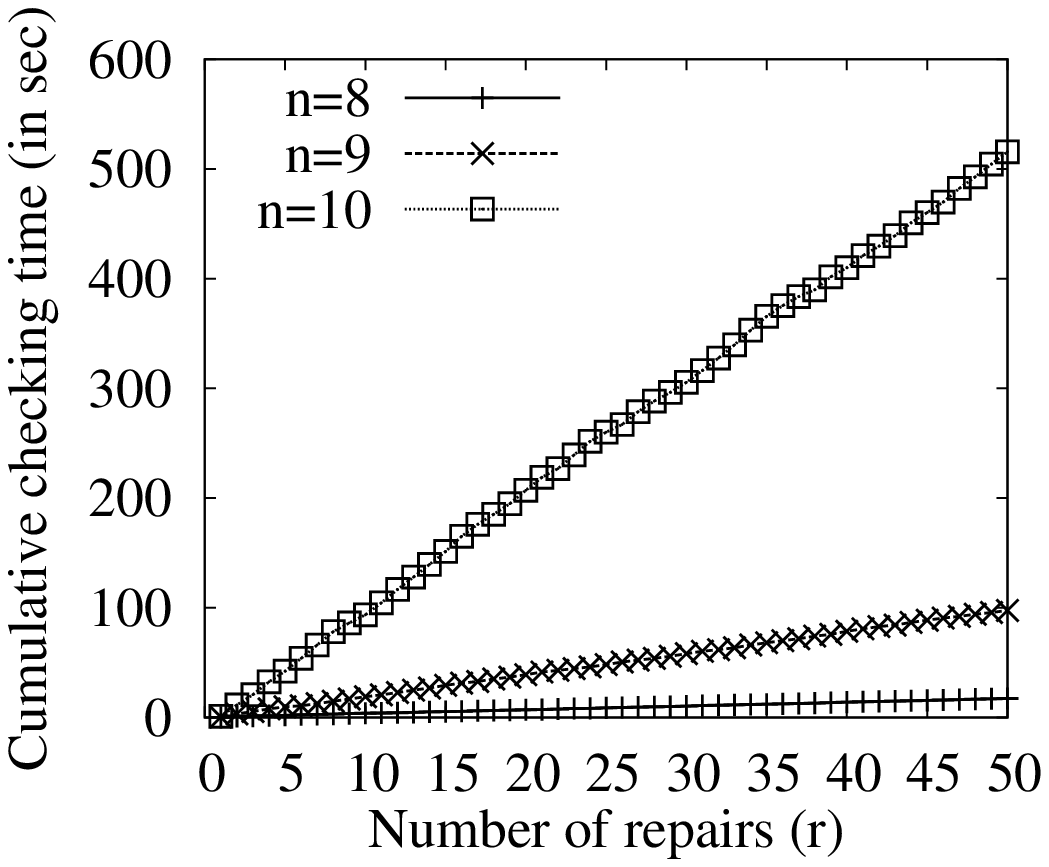} &
\includegraphics[width=1.7in]{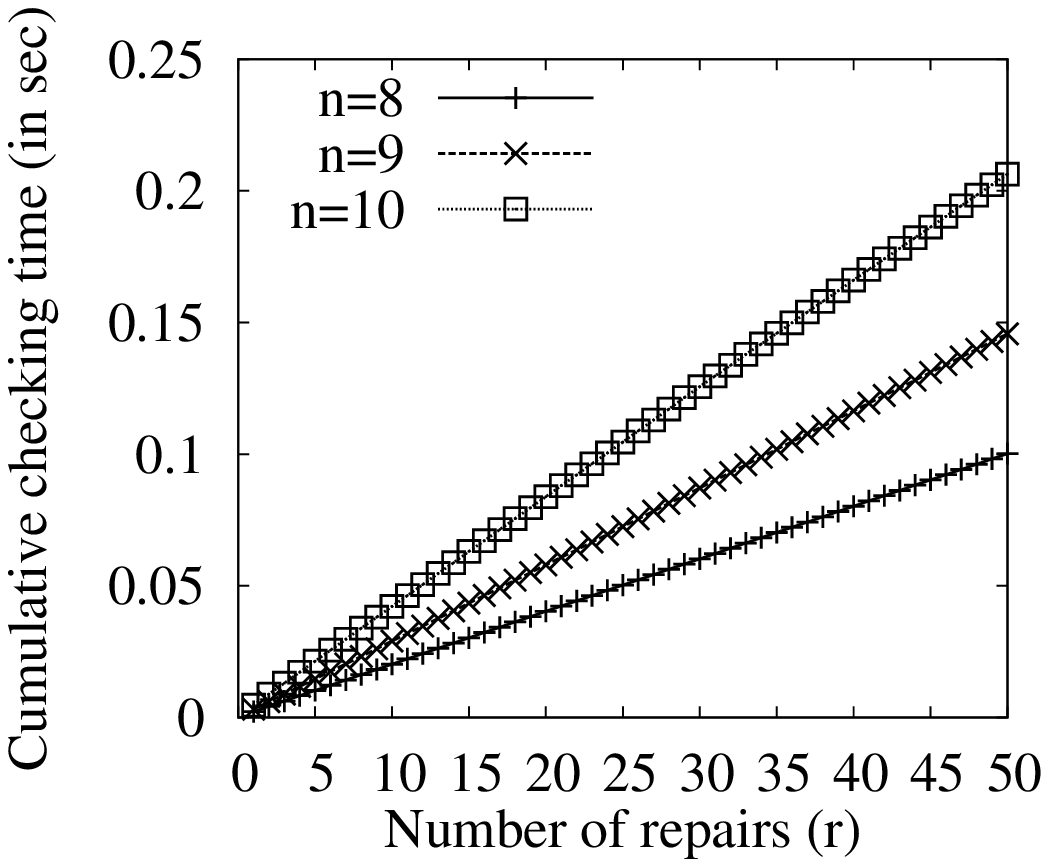} \\
\mbox{(a) random FMSR codes} &
\mbox{(b) deterministic FMSR codes}
\end{tabular}
\caption{Cumulative checking time of $r$ rounds of repair.}
\label{fig:inst_time}
\vspace{-1em}
\end{figure}
\begin{figure}[t]
\centering
\begin{tabular}{c@{\ }c}
\includegraphics[width=1.7in]{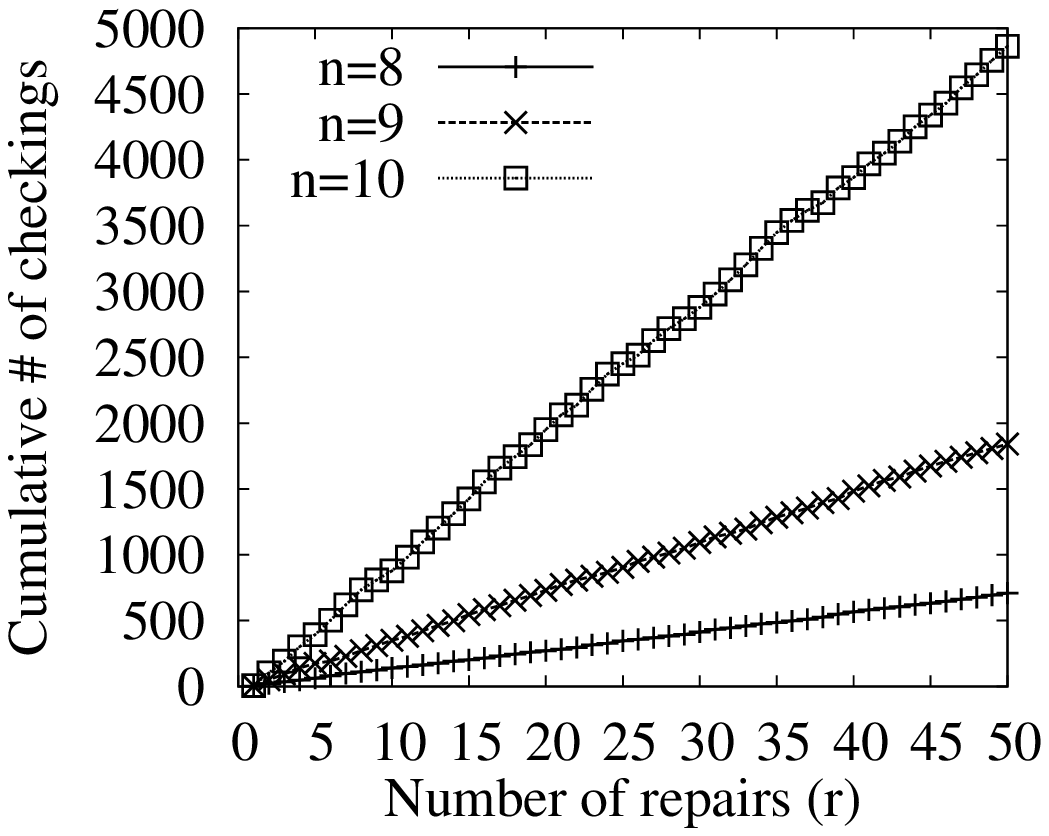} &
\includegraphics[width=1.7in]{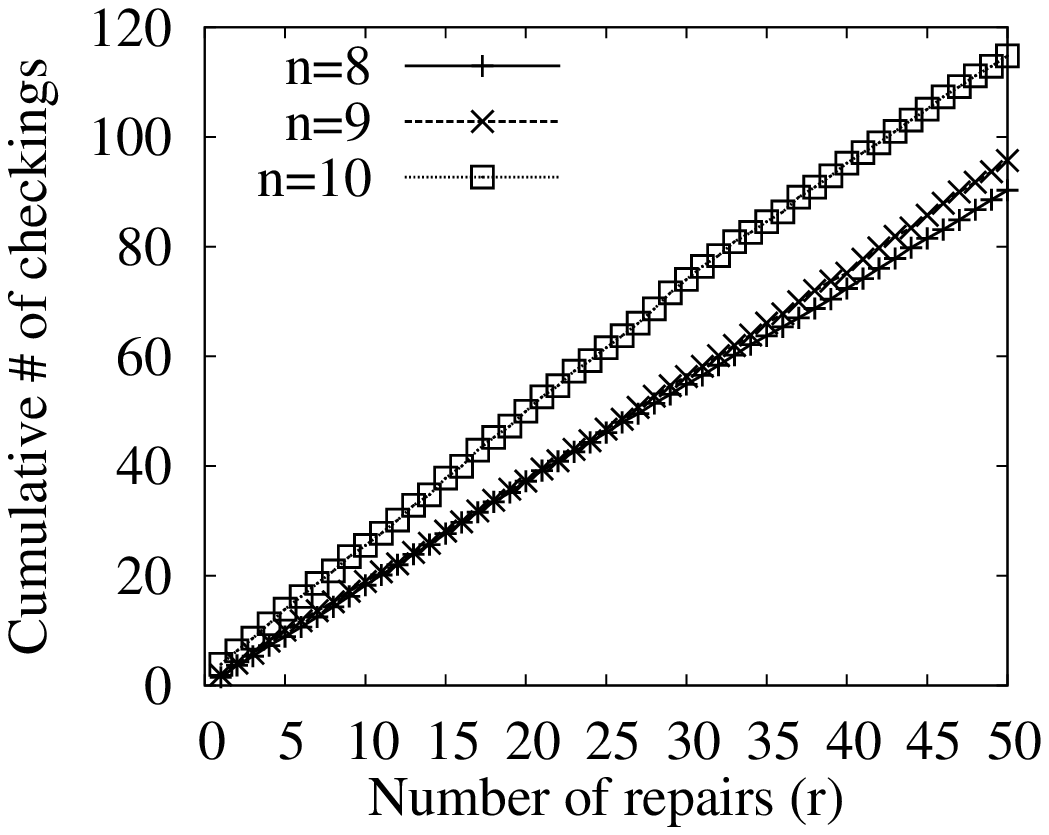} \\
\mbox{(a) random FMSR codes} &
\mbox{(b) deterministic FMSR codes}
\end{tabular}
\caption{Cumulative number of two-phase checkings of $r$ rounds of repair.}
\label{fig:inst_iter}
\vspace{-1em}
\end{figure}

\section{Conclusions}
\label{sec:conclusions}

This paper formulates an uncoded repair problem based on functional minimum
storage regenerating (FMSR) codes. We formally prove the existence of FMSR
codes and provide a deterministic FMSR code construction.  We also show via
our evaluation that our deterministic FMSR codes significantly reduce the
repair time overhead of random FMSR codes.  Our theoretical results validate
the correctness of existing practical FMSR code implementation \cite{Hu12}. We
also demonstrate the feasibility of preserving the benefits of network coding
in minimizing the repair bandwidth with uncoded repair. 


\section*{Acknowledgments}

This work is supported by grants AoE/E-02/08 and ECS CUHK419212 from the
University Grants Committee of Hong Kong.

\bibliographystyle{abbrv}
\bibliography{paper}

\begin{thebibliography}{10}

\bibitem{ahlswede00}
R.~Ahlswede, N.~Cai, S.-Y.~R. Li, and R.~W. Yeung.
\newblock {Network Information Flow}.
\newblock {\em IEEE Trans. on Info. Theory}, 46(4):1204--1216, Jul 2000.

\bibitem{Bhagwan04}
R.~Bhagwan, K.~Tati, Y.~Cheng, S.~Savage, and G.~Voelker.
\newblock {Total Recall: System Support for Automated Availability Management}.
\newblock In {\em Proc. of NSDI}, 2004.

\bibitem{Cadambe10}
V.~R. Cadambe, S.~A. Jafar, and H.~Maleki.
\newblock Distributed data storage with minimum storage regenerating codes -
  exact and functional repair are asymptotically equally efficient.
\newblock ar{X}iv:1004.4299 [cs.IT], 2010.

\bibitem{Calder11}
B.~Calder~et al.
\newblock {Windows Azure Storage: A Highly Available Cloud Storage Service with
  Strong Consistency}.
\newblock In {\em Proc. of ACM SOSP}, 2011.

\bibitem{Chun06}
B.~Chun, F.~Dabek, A.~Haeberlen, E.~Sit, H.~Weatherspoon, M.~F. Kaashoek,
  J.~Kubiatowicz, and R.~Morris.
\newblock {Efficient Replica Maintenance for Distributed Storage Systems}.
\newblock In {\em Proc. of NSDI}, 2006.

\bibitem{Cullina09}
D.~Cullina, A.~G. Dimakis, and T.~Ho.
\newblock Searching for minimum storage regenerating codes.
\newblock In {\em Proc. of Allerton}, 2009.

\bibitem{dimakis10}
A.~G. Dimakis, P.~B. Godfrey, Y.~Wu, M.~Wainwright, and K.~Ramchandran.
\newblock {Network Coding for Distributed Storage Systems}.
\newblock {\em IEEE Trans. on Info. Theory}, 56(9):4539--4551, Sep 2010.

\bibitem{Ghemawat03}
S.~Ghemawat, H.~Gobioff, and S.-T. Leung.
\newblock {The Google File System}.
\newblock In {\em Proc. of ACM SOSP}, 2003.

\bibitem{Greenan08}
K.~M. Greenan, E.~L. Miller, and T.~J.~E. Schwarz.
\newblock {Optimizing Galois Field Arithmetic for Diverse Processor
  Architectures and Applications}.
\newblock In {\em Proc. of IEEE MASCOTS}, 2008.

\bibitem{Hu12}
Y.~Hu, H.~C.~H. Chen, P.~P.~C. Lee, and Y.~Tang.
\newblock {NCCloud: Applying Network Coding for the Storage Repair in a
  Cloud-of-Clouds}.
\newblock In {\em Proc. of FAST}, 2012.

\bibitem{hu10}
Y.~Hu, Y.~Xu, X.~Wang, C.~Zhan, and P.~Li.
\newblock {Cooperative Recovery of Distributed Storage Systems from Multiple
  Losses with Network Coding}.
\newblock {\em IEEE JSAC}, 28(2):268--276, Feb 2010.

\bibitem{raid6}
Intel.
\newblock {Intelligent RAID6 Theory Overview and Implementation}, 2005.

\bibitem{khan12}
O.~Khan, R.~Burns, J.~S. Plank, W.~Pierce, and C.~Huang.
\newblock {Rethinking Erasure Codes for Cloud File Systems: Minimizing I/O for
  Recovery and Degraded Reads}.
\newblock In {\em Proc. of USENIX FAST}, 2012.

\bibitem{Kubiatowicz00}
J.~Kubiatowicz, D.~Bindel, Y.~Chen, S.~Czerwinski, P.~Eaton, D.~Geels,
  R.~Gummadi, S.~Rhea, H.~Weatherspoon, W.~Weimer, C.~Wells, and B.~Zhao.
\newblock Ocean{S}tore: an architecture for global-scale persistent storage.
\newblock In {\em Proc. of ASPLOS}, 2000.

\bibitem{SZtheorem}
R.~Motwani and P.~Raghavan.
\newblock Randomized algorithms.
\newblock In {\em Cambridge University Press}, 1995.

\bibitem{Plank97}
J.~S. Plank.
\newblock {A Tutorial on Reed-Solomon Coding for Fault-Tolerance in RAID-like
  Systems}.
\newblock {\em Software - Practice \& Experience}, 27(9):995--1012, Sep 1997.

\bibitem{rashmi11}
K.~V. Rashmi, N.~B. Shah, and P.~V. Kumar.
\newblock {Optimal exact-regenerating codes for distributed storage at the MSR
  and MBR points via a product-matrix construction}.
\newblock {\em IEEE Trans. on Info. Theory}, 57(8):5227--5239, Aug. 2011.

\bibitem{reed60}
I.~Reed and G.~Solomon.
\newblock {Polynomial Codes over Certain Finite Fields}.
\newblock {\em Journal of the Society for Industrial and Applied Mathematics},
  8(2):300--304, 1960.

\bibitem{Rouayheb10}
S.~E. Rouayheb and K.~Ramchandran.
\newblock {Fractional Repetition Codes for Repair in Distributed Storage
  Systems}.
\newblock In {\em Proc. of Allerton}, 2010.

\bibitem{Shah12b}
N.~B. Shah, K.~V. Rashmi, P.~V. Kumar, and K.~Ramchandran.
\newblock {Distributed storage codes with Repair-by-Transfer and
  Non-Achievability of Interior Points on the Storage-Bandwidth Tradeoff}.
\newblock {\em IEEE Trans. on Info. Theory}, 58(3):1837--1852, Mar. 2012.

\bibitem{shah12}
N.~B. Shah, K.~V. Rashmi, P.~V. Kumar, and K.~Ramchandran.
\newblock Interference alignment in regenerating codes for distributed storage:
  Necessity and code constructions.
\newblock {\em IEEE Trans. on Info. Theory}, 58(4):2134--2158, Sept. 2012.

\bibitem{Shum12}
K.~W. Shum and Y.~Hu.
\newblock Functional-repair-by-transfer regenerating code.
\newblock In {\em Proc. of ISIT}, 2012.

\bibitem{suh11b}
C.~Suh and K.~Ramchandran.
\newblock {Exact-Repair MDS Code Construction Using Interference Alignment}.
\newblock {\em IEEE Trans. on Info. Theory}, 57(3):1425--1442, Mar. 2011.

\bibitem{tamo11}
I.~Tamo, Z.~Wang, and J.~Bruck.
\newblock {MDS Array Codes with Optimal Rebuilding}.
\newblock In {\em Proc. of ISIT}, 2011.

\bibitem{wang10}
Z.~Wang, A.~Dimakis, and J.~Bruck.
\newblock {Rebuilding for Array Codes in Distributed Storage Systems}.
\newblock In {\em IEEE GLOBECOM Workshops}, 2010.

\bibitem{wang11}
Z.~Wang, I.~Tamo, and J.~Bruck.
\newblock {On Codes for Optimal Rebuilding Access}.
\newblock In {\em Proc. of Allerton}, 2011.

\bibitem{Weatherspoon07}
H.~Weatherspoon, P.~Eaton, B.~Chun, and J.~Kubiatowicz.
\newblock {Antiquity: Exploiting a Secure log for Wide-Area Distributed
  Storage}.
\newblock In {\em Proc. of ACM SIGOPS/EuroSys}, 2007.

\bibitem{Yunnanwu10}
Y.~Wu.
\newblock {Existence and Construction of Capacity-Achieving Network Codes for
  Distributed Storage}.
\newblock {\em IEEE JSAC}, 28(2), Feb. 2010.

\bibitem{wu09a}
Y.~Wu and A.~G. Dimakis.
\newblock Reducing repair traffic for erasure coding-based storage via
  interference alignment.
\newblock In {\em Proc. of ISIT}, 2009.

\bibitem{xiang11}
L.~Xiang, Y.~Xu, J.~Lui, Q.~Chang, Y.~Pan, and R.~Li.
\newblock {A Hybrid Approach to Failed Disk Recovery Using RAID-6 Codes:
  Algorithms and Performance Evaluation}.
\newblock {\em ACM Trans. on Storage}, 7(3):11, 2011.

\end{thebibliography}

\appendix

We now prove the correctness of the deterministic FMSR codes in
Section~\ref{sec:deterministic}.  Initiaially, the file is stored with
Reed-Solomon codes, such that any $2k$ out of $2(k+2)$ (parity) chunks are
decodable to the original file.  Therefore, the set of chunks being stored
before any repair satisfies the MDS and erMDS properties.  Now, we show that
the MDS and erMDS properties are always satisfied after each round of repair,
based on our chunk selection and coefficient construction.

\textbf{The first round of repair.} Let $\SU_1=\{P'_{1,1},P'_{1,2};$
$P_{2,1},P_{2,2};$ $\ldots;$ $P_{n,1},P_{n,2}\}$ be the set of all chunks
after the first round of repair (for failed node~1).  Next we prove that
$\SU_1$ still satisfies both the MDS and erMDS properties.

(\emph{$\SU_1$ satisfies the MDS property}) Since the file is stored with
Reed-Solomon Codes, all the chunks of any $k$ out of nodes $2,\ldots,k+2$
before the repair are obviously decodable. Thus, we only need to check whether
the chunks of the repaired node $1$ and any $k-1$ of nodes $2,\ldots,k+2$ are
decodable. Take the repaired node~1 and nodes $2,\ldots,k$ for instance. 
Denote the $2k$ chunks of them by $\SV =$ $\{P'_{1,1},P'_{1,2};$ 
$P_{2,1},P_{2,2};$ $\ldots;$ $P_{k,1},P_{k,2}\}$. Consider the {\em linear
span} of $\SV$ (i.e., the set of all linear combinations of $\SV$). 
Due to Equations~(\ref{fml:deterministic_combination_1}) and
(\ref{fml:deterministic_combination_2}), the linear span of $\SV$ can be
expressed as span($\SV$) $=$
span($\gamma_{k+1,1} P_{k+1,1} +\gamma_{k+2,1} P_{k+2,1}, \gamma_{k+1,2} P_{k+1,1}+\gamma_{k+2,2} P_{k+2,1}$;$P_{2,1},P_{2,2};$ $\ldots;$
$P_{k,1},P_{k,2} $).  Note that the coefficients are chosen in a way such that 
$\gamma_{k+1,1} \gamma_{k+2,2} \neq \gamma_{k+1,2} \gamma_{k+2,1} $ is
satisfied, based on inequality (\ref{fml:deterministic_condition_1}). 
So span($\SV$) $=$ span($P_{k+1,1},P_{k+2,1}$;$P_{2,1},P_{2,2};$ $\ldots;$$P_{k,1},P_{k,2} $).
Based on the erMDS property, $\SV$ is decodable because its linear span 
contains $P_{2,1},P_{3,1},\ldots, P_{k+2,1}$ from nodes $2,\ldots,k+2$,
respectively.

(\emph{$\SU_1$ satisfies the erMDS property})
Since the file is initially stored with Reed-Solomon Codes, the erMDS property
is satisfied before the repair. Hence there already exist $k+1$ chunks, say
$P_{2,1},\ldots,P_{k+2,1}$, such that any RBC containing them is decodable. 
Thus, we only need to check whether for the repaired node~1 and any $k$ of
nodes $2,\ldots,k+2$, there always exist $k+1$ chunks such that by collecting
one chunk from each such node, any RBC containing them is decodable.  Without
loss of generality, we just consider the case for the repaired node~1 and
nodes~$2,\ldots,k+1$ for simplicity.

Here, we select the $k+1$ chunks in the way that they are {\em distinct} from
those selected for the first round of repair.  In this case, we collect $\SF_1
= \{P'_{1,2},P_{2,2},\ldots,P_{k+1,2}\}$ (note: either $P'_{1,1}$ or
$P'_{1,2}$ is fine).  Next we show that the constructed $\gamma_{i,j}$ can
make any RBC containing $\SF_1$ decodable.  Since the repaired node~1 may
offer one or two chunks to an RBC, we consider two cases.

\emph{Case 1}: The repaired node~1 only offers one chunk. Then the RBC needs
another $k-1$ nodes (e.g., nodes~$2,\ldots,k$) to offer all their chunks and
another one node (e.g., node~$k+1$) to offer one chunk.  To make the RBC
include $\SF_1$, we have the repaired node~1 offer $P'_{1,2}$ and node~$k+1$
offer $P_{k+1,2}$.  Then the RBC is
$\SR_1 = \{P'_{1,2};P_{2,1},P_{2,2};\ldots;P_{k,1},
P_{k,2};P_{k+1,2}\}$.  By Equation~(\ref{fml:deterministic_combination_2}),
span($\SR_1$) $=$ span($\gamma_{k+1,2}P_{k+1,1}+\gamma_{k+2,2}P_{k+2,1};$
$P_{2,1},P_{2,2};\ldots; P_{k,1},P_{k,2};P_{k+1,2}$).

Based on the MDS property, we consider a decodable collection
$\SZ = \{P_{2,1},P_{2,2};\ldots;P_{k+1,1},P_{k+1,2}\}$. Then $P_{k+2,1}$ is a
linear combination of $\SZ$, and can be expressed as
\begin{equation}
\label{equ:appendix}
P_{k+2,1}=\sum_{i=2}^{k+1} \sum_{j=1}^2 \lambda_{i,j}^{(k+2)} P_{i,j},
\end{equation}
where $\lambda_{i,j}^{(k+2)}$ is an encoding coefficient for
$i=2,\ldots,k+1$ and $j=1,2$.  Thus, the linear span of $\SR_1$ is
span($\SR_1$) $=$
span($(\gamma_{k+1,2}+\gamma_{k+2,2}\lambda_{k+1,1}^{(k+2)})P_{k+1,1};$
$P_{2,1},P_{2,2};$ $\ldots;P_{k,1},P_{k,2};P_{k+1,2}\}$).  Note that the
coefficients are chosen such that 
$\gamma_{k+1,2}+\gamma_{k+2,2}\lambda_{k+1,1}^{(k+2)} \neq 0$ is satisfied,
based on inequality (\ref{fml:deterministic_condition_2}). 
Thus, span($\SR_1$) $=$ span($P_{2,1},P_{2,2};\ldots;P_{k,1},P_{k,2}$). The
linear span of $\SR_1$ is a decodable collection due to the MDS property.
Thus, $\SR_1$ is decodable. 

\emph{Case 2}: The repaired node~1 offers two chunks. So the RBC contains both
$P'_{1,1}$ and $P'_{1,2}$.  The RBC needs another $k-2$ nodes (e.g.,
nodes~$2,\ldots,k-1$) to offer all their chunks and another two nodes (e.g.,
nodes $k$ and $k+1$) to offer one chunk.  To make the RBC contain $\SF_1$, we
have nodes~$k$ and $k+1$ offer $P_{k,2}$ and
$P_{k+1,2}$, respectively.  Then the RBC is $\SR_2 = \{P'_{1,1},P'_{1,2};$
$P_{2,1},P_{2,2};\ldots;$ $P_{k-1,1},P_{k-1,2};P_{k,2};P_{k+1,2}\}$. 
%
Similar to the proof of Case~1, by Equations~
(\ref{fml:deterministic_combination_1}), 
(\ref{fml:deterministic_combination_2}), and
(\ref{equ:appendix}), the linear span of $\SR_2$ can be expressed as 
\[
\begin{array}{ll}
\textrm{span}(\SR_2) = & 
\textrm{span}((\gamma_{k,1} + \gamma_{k+2,1}\lambda_{k,1}^{(k+2)})P_{k,1} + \\
		& \ \ (\gamma_{k+1,1}+\gamma_{k+2,1}\lambda_{k+1,1}^{(k+2)})P_{k+1,1},\\
		&  (\gamma_{k,2} + \gamma_{k+2,2}\lambda_{k,1}^{(k+2)})P_{k,1} + \\
		& \ \ (\gamma_{k+1,2}+\gamma_{k+2,2}\lambda_{k+1,1}^{(k+2)})P_{k+1,1},\\
		& P_{2,1},P_{2,2};\ldots;P_{k-1,1},P_{k-1,2};P_{k,2};P_{k+1,2}).
\end{array}
\]
Note that the coefficients are chosen in a way such that 
$(\gamma_{k,1}+\gamma_{k+2,1}\lambda_{k,1}^{(k+2)})(\gamma_{k+1,2}+\gamma_{k+2,2}\lambda_{k+1,1}^{(k+2)})
\neq
(\gamma_{k+1,1}+\gamma_{k+2,1}\lambda_{k+1,1}^{(k+2)})(\gamma_{k,2}+\gamma_{k+2,2}\lambda_{k,1}^{(k+2)})$
is satisfied, based on inequality (\ref{fml:deterministic_condition_3}). 
Thus, span($\SR_2$) $=$ 
span($\{P_{2,1},P_{2,2};\ldots;P_{k+1,1},P_{k+1,2}\}$). The linear span of
$\SR_2$ is decodable due to the MDS property. Thus, $\SR_2$ is decodable. 

\textbf{The $r^{th}$ repair ($r>1$)}
Take $r=2$ for instance. Suppose without loss of generality that node $k+2$
fails. Then we select $\{P'_{1,2},P_{2,2},\ldots,P_{k+1,2}\}$ which are
distinct from those in the first round of repair. We can observe that in fact
this set is $\SF_1$ in the first round of repair. As mentioned above, any RBC
containing $\SF_1$ is decodable. So $\SF_1$ can be used for the second round
of repair. Then we can generate the coefficients that satisfy the similar
inequalities
as~(\ref{fml:deterministic_condition_1}),~(\ref{fml:deterministic_condition_2}),
and~(\ref{fml:deterministic_condition_3}). The proof of correctness is similar
as $r=1$ and thus omitted.

\end{document}